% STOC style: \documentclass[sigconf,screen]{acmart}
%\documentclass[acmlarge,nonacm,screen,11pt]{acmart} %,review,anonymous
\documentclass[aps,pre,%linenumbers,preprint,
               amsmath,amssymb,nofootinbib]{revtex4-2}
\usepackage[utf8]{inputenc}
\usepackage{natbib}
\usepackage{algorithm}
\usepackage{algorithmic}
\usepackage{amsmath}
\usepackage{amsthm}
\usepackage{amsfonts}
\usepackage{geometry}
\usepackage{verbatim}
\usepackage{tabularx}
\usepackage{graphicx}
\usepackage{hyperref}
\usepackage{cleveref}
\usepackage{todonotes}
\usepackage{siunitx}
\usepackage{tabularx,booktabs}

\newtheorem{theorem}{Theorem}
\newtheorem{lemma}[theorem]{Lemma}
\newtheorem{corollary}[theorem]{Corollary}

%\crefname{appendix}{}{}

\newcommand{\expect}[1]{\left\langle #1 \right\rangle}
\newcommand{\bits}{\mathcal B}
\newcommand{\ints}{\mathbb Z}
\newcommand{\reals}{\mathbb R}
\newcommand{\enc}{e}
\newcommand{\mi}{:}
\newcommand{\dinexact}{{\mathchar'26\mkern-12mu d}}
\newcommand{\pf}[1]{\overline{#1}}
\newcommand{\lplus}{\stackrel{+}{<}}
\newcommand{\lmul}{\stackrel{\times}{<}}
\newcommand{\eqplus}{\stackrel{+}{=}}
\newcommand{\eqmul}{\stackrel{\times}{=}}
\newcommand{\gplus}{\stackrel{+}{>}}

%\makeatletter
%\def\ps@pprintTitle{%
%  \let\@oddhead\@empty
%  \let\@evenhead\@empty
%  \def\@oddfoot{\reset@font\hfil\thepage\hfil}
%  \let\@evenfoot\@oddfoot
%}
%\makeatother

\begin{document}

\title{Foundations of algorithmic thermodynamics}
\footnotetext[0]{An early draft of this article, titled “The algorithmic second law of thermodynamics”, was presented at the 17th International Conference on Computability, Complexity and Randomness in Nagoya, Japan.}
\author{Aram Ebtekar}
\email{aramebtech@gmail.com}
\affiliation{Vancouver, BC, Canada}

\author{Marcus Hutter}
\email{http://www.hutter1.net/}
%\email{marcus.hutter@anu.edu.au}
%\email{mhutter@google.com}
\affiliation{Google DeepMind, London, UK}
%\affiliation{Australian National University, Canberra, Australia}

% \title{The algorithmic second law of thermodynamics\tnoteref{t1}}

%\tnotetext[t1]{This research did not receive any specific grant from funding agencies in the public, commercial, or not-for-profit sectors.}
%\author[1]{Aram Ebtekar\corref{cor1}}
%\ead{aramebtech@gmail.com}

%\cortext[cor1]{Corresponding author}
%\affiliation[1]{
%city={Vancouver},
%country={Canada}
%}

%\date{\today}

\begin{abstract}
G{\'a}cs' coarse-grained algorithmic entropy leverages universal computation to quantify the information content of any given physical state. Unlike the Boltzmann and Gibbs-Shannon entropies, it requires no prior commitment to macrovariables or probabilistic ensembles, rendering it applicable to settings arbitrarily far from equilibrium. For measure-preserving dynamical systems equipped with a Markovian coarse-graining, we prove a number of fluctuation inequalities. These include algorithmic versions of Jarzynski's equality, Landauer's principle, and the second law of thermodynamics. In general, the algorithmic entropy determines a system's actual capacity to do work from an individual state, whereas the Gibbs-Shannon entropy only gives the mean capacity to do work from a state ensemble that is known a priori.
\end{abstract}

%\begin{keyword}
%algorithmic entropy \sep information \sep Markov processes \sep nonequilibrium thermodynamics \sep fluctuation inequalities \sep randomness conservation
%\end{keyword}

\maketitle

\section{Introduction}

Many of the most successful theories in physics have an \emph{initial value formulation}, in which the Universe is fully determined by its \emph{initial conditions} and \emph{dynamical equations of motion}. The second law of thermodynamics, despite being widely considered one of the most important facts of nature, does not appear explicitly in such formulations. Not only is it absent among the dynamical equations, but its irreversibility stands in contrast to the equations' charge-parity-time (CPT) symmetry \citep{lehnert2016cpt}. Nonetheless, if the second law is to hold for such formulations, then it must somehow follow from the initial condition and dynamics \citep{davies1977physics,albert2001time,ebtekar2024modeling}.

Another difficulty with the second law is its scope of applicability. Informally, it states that the \textbf{entropy} of an isolated physical system tends to increase. In order to apply this statement broadly, we require an unambiguous nonequilibrium definition of entropy. Focusing on classical (as opposed to quantum) mechanics for simplicity's sake, our definition should not depend on a prior choice of macrovariables or probabilistic ensemble (as the Boltzmann-Gibbs-Shannon entropies do). Moreover, the second law should not depend on properties such as nonequilibrium steady state or local detailed balance, that only hold in limited settings \citep{shiraishi2019fundamental,maes2021local,shiraishi2023introduction}.

Thus, we seek to define entropy as a function of the individual states in phase space, in such a way that a suitable initial value formulation makes it increase over time. Our task is made possible by two major insights from the scientific literature: one originating from the theory of dynamical systems, and the other from algorithmic information theory (AIT).

The first insight is the existence of \textbf{Markovian coarse-grainings}. These are memoryless partitions of a dynamical system's phase space into discrete cells. In more detail, we mean that the probability distribution of the system's coarse-grained state at any future time, conditional on its past and present, is given by evolving the dynamics forward from a uniform (i.e., proportional to the Liouville measure) distribution over the present cell. Whenever this property holds, we can take a discrete view of the system as a time-homogeneous Markov process. Taking the Liouville measure of each cell yields a discrete stationary measure for this process \citep{altaner2012microscopic}.

It remains an open problem to characterize when a coarse-graining is approximately Markovian \citep{nicolis1988master,werndl2009deterministic,figueroa2019almost,figueroa2021markovianization,strasberg2023classicality,ebtekar2024modeling}. To get some intuition, we can study toy systems whose coarse-graining is \emph{exactly} Markovian. For instance, \citet{altaner2012microscopic} introduce the \emph{network multibaker maps}: these are time-reversible deterministic chaotic dynamical systems, with microscopic randomization only in the initial state, whose coarse-graining emulates a wide variety of Markov chains. To illustrate how this occurs, we present a simplified version in \Cref{sec:apxmarkov}. These maps rigorously demonstrate that macroscopic irreversibility is compatible with microscopic reversibility, while also hinting at conditions under which more realistic systems might be approximately Markovian.

The Markov assumption is the basis for much of \textbf{stochastic thermodynamics} \citep{altaner2012microscopic,seifert2012stochastic,peliti2021stochastic,sagawa2022entropy,gaspard2022statistical,shiraishi2023introduction}, a powerful modern framework that replaces continuous-state dynamical systems by (usually discrete-state) Markov processes. These are easier to analyze: for Markov processes, the non-decrease of Gibbs-Shannon entropy is a straightforward and mathematically rigorous theorem, applicable to probability distributions arbitrarily far from equilibrium \citep[\S4.4]{thomas2006elements}.

However, while Markovian coarse-grainings motivate a probabilistic description of the \emph{dynamics}, nonequilibrium \emph{states} may lack an appropriate, non-subjective ensemble description, especially if they arise from an intricate computation; we elaborate on the reasons in \Cref{sec:probability}. To get an ensemble-free definition of entropy, as a function of individual physical states, we turn to a computability-based notion from the AIT literature. Intuitively speaking, the connection between physics and computability arises because the coarse-grained dynamics of our Universe are believed to have computational capabilities equivalent to a universal Turing machine \citep{deutsch2013computation,janzing2018does,kolchinsky2020thermodynamic,wolpert2024implications}.

\citet{gacs1994boltzmann} defines the coarse-grained \textbf{algorithmic entropy} of any individual state: roughly speaking, it is the number of bits of information that a fixed computer needs in order to identify the state's coarse-grained cell. For example, a state in which all particles are concentrated in one location would have low entropy, because the repeated coordinates can be printed by a short program. If the coarse-graining in question is Markovian, then \citet{levin1984randomness}'s law of \textbf{randomness conservation} says that the algorithmic entropy seldom decreases. In physical terms, we will come to see this as a vast generalization of the second law of thermodynamics.

In mathematical terms, it is an \emph{integral fluctuation} relation, meaning that it bounds an \emph{expectation} of the exponentiated entropy production. It is a statistical law that allows occasional small decreases in entropy. In stochastic thermodynamics, integral fluctuation relations are often derived as averages of corresponding \emph{detailed fluctuation} relations on individual state transitions \citep{seifert2012stochastic}. In \Cref{sec:apxrandom}, we state and prove the detailed fluctuation relations for randomness conservation, and show that they imply the integral relations. The remainder of this article explores the physical consequences of these relations, particularly when dealing with information-processing systems. Thus, we plant the seeds for a new computability-based view of thermodynamics, in which the entropy is a function of individual states rather than probability distributions.

\emph{Article outline.}  To start, \Cref{sec:background} places our work in the context of some relevant literature. \Cref{sec:preliminaries} reviews some notation, definitions, and useful facts.

\Cref{sec:theory} presents our theoretical contributions. Broadly speaking, we reformulate some aspects of stochastic thermodynamics in terms of G\'acs' coarse-grained algorithmic entropy. \Cref{sec:algoentropy} adapts G\'acs' definition to the Markovian setting, where \citet{levin1984randomness}'s randomness conservation applies. \Cref{sec:probability} compares the Gibbs-Shannon and algorithmic entropies, presenting conditions under which they coincide. When they do not coincide, we argue that a system's capacity to do work is in fact determined by the algorithmic entropy.

\Cref{sec:environment} introduces the concept of reservoirs that can exchange heat and work, and derives fluctuation inequalities for the algorithmic entropy flow and production. \Cref{sec:fluctuation} specializes these inequalities to the case of one reservoir at constant temperature. One of the inequalities generalizes \citet{kolchinsky2023}'s recent lower bound on the heat flow during a computation, while others can be seen as algorithmic versions of Jarzynski's equality \citep{jarzynski2011equalities} and Landauer's principle \citep{landauer1961irreversibility}, describing the exchange of heat, work, and information.

\Cref{sec:refine} extends randomness conservation to settings with variable dynamics or long time horizons, resulting in a fully general nonequilibrium second law of thermodynamics. As a special case, we recover a result of \citet{janzing2016algorithmic}, but with a different interpretation: we conclude that entropy is only produced when the coarse-grained dynamics are random. Our \Cref{cor:secondlaw} is arguably the most complete statement of the second law to date: it uses an ensemble-free notion of entropy, applies to arbitrary time intervals on a trajectory (but not its time-reversal), and allows fluctuations (which include Poincar\'e recurrence \citep{saussol2009introduction}).

\Cref{sec:discussion} presents some more applications. \Cref{sec:discussionintro} briefly discusses the effective dynamics of open systems, and introduces a few useful tricks for constructing examples. \Cref{sec:maxwell} applies our algorithmic second law to get an especially straightforward analysis of Maxwell's demon. \Cref{sec:landauer} discusses the thermodynamic costs of three kinds of information-processing: randomization, computation, and measurement. Finally, in order to demonstrate how a deficiency of algorithmic entropy serves as a resource, \Cref{sec:engine} models an information-theoretic analogue of a heat engine, which takes compressible strings as fuel.

\Cref{sec:conclusion} concludes with some possible directions for further research. The core mathematical ideas that we build upon, Markovian coarse-grainings and randomness conservation, are detailed in the appendices. While they are based on previous work \citep{gaspard1992diffusion,altaner2012microscopic,levin1984randomness,gacs2021lecture}, \Cref{sec:apxmarkov} considerably simplifies the network multibaker maps; meanwhile, \Cref{sec:apxrandom} states and proves randomness conservation in a manner that more closely parallels the thermodynamic fluctuation theorems \citep{seifert2012stochastic}. We hope that these ideas become more accessible as a result.

\section{Background}
\label{sec:background}

When Clausius first coined the term ``entropy'', thermodynamics was a macroscopic theory of work and heat transfer. The connection to information theory first arose in Szilard's response to Maxwell's famous thought experiment, in which a ``demon'' appears to violate the second law of thermodynamics by efficiently and intelligently tidying a system. Szilard's great insight was that any such tidying process must necessarily process information about the system. Once this information is correctly accounted for, the second law is restored \citep{maruyama2009colloquium}.

This accounting is usually done in an ad hoc manner. It remains difficult to define entropy in a unified manner that applies straightforwardly to Maxwell's demon and other nonequilibrium systems. The old definitions of Clausius and Boltzmann depend on specially chosen macrovariables, such as temperature, pressure, and chemical composition; these are well-suited to traditional equilibrium systems, but not to the demon's information storage.

Overly fine-grained microscopic definitions of entropy are equally unsuitable. Since the laws of physics are deterministic and time-reversible, fine-grained information can never be created or destroyed. Indeed, in classical mechanics, Liouville's theorem implies that the differential Gibbs-Shannon entropy is constant under Hamiltonian evolutions \citep{carcassi2020hamiltonian}. While quantum mechanics is beyond the scope of our article, we note that the von Neumann entropy is likewise constant under unitary evolutions. Although \citet[Appendix C]{zurek1989algorithmic} predicted a deterministic increase in the algorithmic entropy, which was later proved by \citet{janzing2016algorithmic}, \citet{gacs1994boltzmann} shows that this increase is negligible in practice; we will extend these results in \Cref{cor:janzing}.

A more suitable definition lies between the macroscopic and microscopic extremes, as one often finds a \emph{mesoscopic} coarse-graining that is approximately Markovian. A standard approach truncates the canonical positions and momenta, thus coarse-graining phase space into $6N$-dimensional hypercubes of measure $h^{3N}$, $h$ being Planck's constant and $N$ the number of particles \citep[\S12]{ehrenfest1959conceptual} \citep[\S2.7]{gaspard2022statistical}. In this coarse-grained view, the dynamics are effectively random, allowing the entropy to increase at up to the Kolmogorov-Sinai rate \citep{latora1999kolmogorov,entropybook,gaspard2022statistical}. \citet{nicolis1988master} and \citet{werndl2009deterministic} study some chaotic systems with Markovian coarse-grainings. In quantum mechanical settings, \citet{zurek1998decoherence} argues that decoherence has a similar effect.

At present, the Markov assumption appears to be necessary for much of thermodynamics. Indeed, due to CPT symmetry, entropy increase theorems that proceed directly from Hamiltonian dynamics tend to be too weak. \citet{gacs1994boltzmann} discusses a few of these. For example, ergodic systems are known to converge toward maximum entropy in both the infinite future and the infinite past. This fact does not distinguish the two temporal directions, and makes no comment on finite time intervals; in particular, it tells us nothing about how the entropy of yesterday should compare to the entropy of tomorrow.

He also discusses another kind of result, in which starting from a ``typical'' state guarantees that the entropy never falls below its initial value. \citet{kawai2007dissipation} describe a Hamiltonian formulation (see also \citet{esposito2010entropy} for a quantum mechanical version), in which the ``typical'' state is one where the environment is at equilibrium. For the Universe as a whole, \citet{albert2001time} envisions the Big Bang to meet the criteria for a typical state. The problem with this type of argument is that it only works once: subsequent states, having non-minimal entropy, are necessarily ``atypical''. From there, we have no reason to expect further increases in entropy.

Instead, we need the initial state to have the stronger property that its subsequent transitions forever retain ``typical'' statistics, long after the state itself becomes atypical. In other words, the coarse-grained trajectory should be a time-homogeneous Markov process. By modeling physical systems as Markov processes, we can formulate the second law over arbitrary time intervals.

From now on, we refer to coarse-grained \emph{states} or \emph{cells} interchangeably. In stochastic thermodynamics, the Gibbs-Shannon entropy is defined as a function, not of individual states, but of \emph{ensembles} that assign a probability $\mu(x)$ to each coarse-grained state $x$ \citep{seifert2012stochastic,peliti2021stochastic,sagawa2022entropy,gaspard2022statistical,shiraishi2023introduction}. Defining the Shannon codelength or \textbf{stochastic entropy} of each state $x$ by
\begin{equation*}
\hat H(x,\,\mu) := \log\frac{1}{\mu(x)},
\end{equation*}
the \textbf{Gibbs-Shannon entropy} is its expectation
\begin{equation*}
H(\mu)
:= \expect{\hat H(X,\,\mu)}_{X\sim\mu}
= \sum_{x} \mu(x)\log\frac{1}{\mu(x)}.
\end{equation*}

Note that these entropies are undefined for individual states $x$, unless a distribution $\mu$ is specified. Nonequilibrium choices of $\mu$ are typically arrived at by evolving a Markov process starting from a probabilistically prepared initial state. Such approaches have led to a number of major developments in the thermodynamics of information and computation \citep{sagawa2012thermodynamics,parrondo2015thermodynamics,wolpert2019stochastic,baumeler2019free,kolchinsky2020thermodynamic,parrondo2023information,manzano2024thermodynamics}. Nonetheless, the physical meaning of $\mu$ is not always clear \citep{frigg2019statistical}.

Entropy's main role in physics is to quantify a system's capacity to do work. Inspired by a thought experiment in which compressible data is used to do physical work, \citet{bennett1982thermodynamics} argues that a more precise and general definition of entropy should leverage universal computation to \emph{infer} the best description for any given state. Therefore, he defines the \textbf{algorithmic entropy} of an \emph{individual} state $x$ to be its Solomonoff-Kolmogorov-Chaitin description complexity $K(x)$, i.e., the length of the shortest program that outputs $x$ on a fixed universal computer. \citet{li2019introduction} study this function $K$ in detail.

\citet{zurek1989algorithmic} develops its physical interpretation and proves that
\begin{equation}
\label{eq:Hisanaverage}
H(\mu)\eqplus\expect{K(X\mid\mu)}_{X\sim\mu},
\end{equation}
where the equality holds up to an additive constant that does not depend on $\mu$. After obtaining a measurement result $x$ from a physical system, Zurek defines its entropy to be the sum $K(x)$ plus the posterior Gibbs-Shannon entropy given $x$. While this definition appears quite general, its dependence on measurements takes away from the objectivity of Bennett's definition.

\citet{gacs1994boltzmann} proposes a more natural refinement of Bennett's definition. First, he clarifies that the argument $x$ is a coarse-grained state. Second, when states occupy different Liouville measures $\pi(x)$ in the underlying phase space, he adds a correction term, defining the algorithmic entropy as
\begin{equation}
\label{eq:gacs}
S_\pi(x) := K(x) + \log\pi(x).
\end{equation}

We can view G\'acs' definition \labelcref{eq:gacs} as a special case of Zurek's, in which the measurement is a fixed function of the fine-grained state. If this function's range is countable, its elements correspond to cells of a coarse-graining. Committing to a fixed coarse-graining removes the need to actually perform measurements, making \labelcref{eq:gacs} purely a function of the individual coarse-grained state $x$. Thus, measurements are not built into the definition \labelcref{eq:gacs}, freeing us to consider arbitrary measurements as part of the dynamics, as we illustrate in \Cref{sec:maxwell}. We can also consider uncertain or random mesostates, as in stochastic thermodynamics: for random $X$, the entropy $S_\pi(X) = K(X) + \log\pi(X)$ becomes a random variable that depends on the value of $X$.

In settings where the Gibbs-Shannon and algorithmic entropies disagree, we must clarify their respective relationships to a system's capacity for work. In \Cref{sec:fluctuation}, we will see that the algorithmic entropy fundamentally determines the maximum amount of work that can be extracted from a specific physical state. Equation \labelcref{eq:Hisanaverage} then implies that the Gibbs-Shannon entropy measures the \emph{average} capacity for work, \emph{given} probabilistic knowledge of the state. In \Cref{sec:probability}, we elaborate on this distinction. Since the algorithmic entropy does not depend on averaging or priors, it carries a more objective physical meaning, which carries over to nonequilibrium settings where we lack probabilistic knowledge.

An important caveat is that the algorithmic entropy depends on a choice of universal computer. The dependence is bounded by the length of a compiler or interpreter between any pair of computers that we want to compare; fortunately, for realistic microprocessors, this length appears to be quite small. Indeed, entropy is measured in logarithmic units, such as bits, nats, or Joules per Kelvin \citep{frank2005indefinite}. The conversion rates are given by
\begin{equation}
\label{eq:conversion}
\SI{1}{\bit}
= k_B\ln 2
= \SI{9.57e-24}{\joule\per\kelvin},
\end{equation}
where $k_B$ is Boltzmann's constant, equivalent to 1 nat \footnote{The 2019 redefinition of SI units made $k_B := \SI{1.380649e-23}{\joule\per\kelvin}$ a ``defining constant'' \citep{newell2019international}. Effectively, the Kelvin became a derived unit, equal to \emph{exactly} $\SI{1.380649e-23}{}$ Joules per nat \citep{bedard2024temperature}.}. Pessimistically, consider a pair of computers, whose interpreters in each direction compress to about $\SI{12}{\gibi\byte}$ (i.e., $12\times 2^{33}$ bits). This is much larger than any practical interpreter known to the authors; and yet, even languages as distant as these would agree on the entropy of every system to within
\[12\times 2^{33}\times\SI{9.57e-24}{\joule\per\kelvin} < \SI{e-12}{\joule\per\kelvin}.\]

For macroscopic systems, this is a negligible difference. Of course, our notion of a ``realistic microprocessor'' should include physical size and resource constraints. Inspired by the Turing machine model, we imagine a microscopic control head operating on infinite-length tapes. Since its head is small, such a machine cannot cheat by ``hardcoding'' arbitrary data in its specification. Thus, the laws of nature may well determine which strings are considered simple or complex \citep{deutsch2013computation,janzing2018does}. We hope to make this argument rigorous in future work; see \citet[Appendix B]{zurek1989algorithmic} for a discussion of related issues.

Our article can also be compared with more recent works. \citet{baez2012algorithmic} apply thermodynamics to AIT, whereas we apply AIT to thermodynamics. \citet{kolchinsky2020thermodynamic} relate thermodynamic costs to description complexity; however, their approach depends on specific probabilistic priors, whereas we derive universal bounds from \citet{levin1984randomness}'s randomness conservation law. \citet{kolchinsky2023} proves an algorithmic detailed fluctuation inequality, which we show to follow from one of our own.

Throughout this article, we assume the existence of Markovian coarse-grainings and a suitable universal computer. The former allows us to substitute physical systems with their Markov process counterparts. Our modeling approach amounts to a minimalist version of stochastic thermodynamics, abstracting away many details of the physics to produce simple rigorous statements under minimal assumptions.

\section{Preliminaries}
\label{sec:preliminaries}

We start with some notation. $\ints$, $\mathbb Q$, and $\reals$ denote the integers, rational numbers, and real numbers, respectively. $\ints^+$, $\mathbb Q^+$, and $\reals^+$ denote their respective nonnegative subsets. $\ints_m:=\{0,1,\ldots,m-1\}$ denotes the first $m$ elements of $\ints^+$. Let $\bits:=\{\mathtt 0,\mathtt 1\}\simeq\ints_2$; its Kleene closure $\bits^*$ is the set of all finite-length binary strings. For a string $x\in\bits^*$, $|x|$ denotes its length in bits. For a set $A$, $|A|$ denotes its cardinality. Juxtaposition of strings $xy$ indicates their concatenation. When $f$ is a two-argument function, $f(\cdot,\,x)$ denotes the one-argument function that maps $y\mapsto f(y,\,x)$.

The capital letters $X,Y$ refer to random variables, while the lowercase variables $x,y$ refer to the specific values they take on. Expectations of random variables are denoted by angled brackets $\expect{\cdot}$. The probability of an event $E$, conditional on another event $F$, is denoted by $\Pr(E\mid F)$. The expression $\Pr(Y\mid X)$ is interpreted as a random variable, whose value is $\Pr(Y=y\mid X=x)$ whenever $Y=y$ and $X=x$. Similarly, the conditional expectation $\expect{\cdot\mid X}$ is a random variable that depends on the value of $X$.

\subsection{Stochastic matrices}
\label{sec:prelimmatrices}

Stochastic thermodynamics takes place on coarse-grained state spaces, which we represent as countable (i.e., finite or countably infinite) sets $\mathcal X,\mathcal Y$. Probabilities of transitions from $x\in\mathcal X$ to $y\in\mathcal Y$ are given by a \textbf{stochastic matrix} $P:\mathcal Y\times\mathcal X\rightarrow\reals^+$, satisfying
\[\forall x\in\mathcal X,\quad \sum_{y\in\mathcal X} P(y,\,x)=1.\]

Its action on a discrete \textbf{measure} $\pi:\mathcal X\rightarrow\reals^+$ takes it to a successor measure $P\pi:\mathcal Y\rightarrow\reals^+$, given by
\begin{equation}
\label{eq:dualpi}
P\pi(y) := \sum_{x\in\mathcal X} P(y,\,x)\pi(x).
\end{equation}
If $\sum_{x\in\mathcal X} \pi(x) = 1$, $\pi$ is called a \textbf{probability measure} (or distribution, mixture, or ensemble), and it follows that $P\pi$ is also a probability measure.

Given $\pi$ and $P$ such that $P\pi$ is nonzero everywhere, we define the \textbf{dual matrix} $\widetilde P:\mathcal X\times\mathcal Y\rightarrow\reals^+$ \citep{kolmogoroff1936theorie,nagasawa1964time} by
\begin{equation}
\label{eq:dualP}
\widetilde P(x,\,y) := \frac{P(y,\,x)\pi(x)}{P\pi(y)}.
\end{equation}
\Cref{eq:dualpi,eq:dualP} imply that $\widetilde P$ is stochastic, and satisfies $\widetilde P(P\pi)=\pi$.

Note that if $x$ is sampled according to $\pi$, and $y$ according to $P(\cdot,\,x)$, then by Bayes' rule, $\widetilde P(x,\,y)$ gives the reverse transition probability of $x$ given $y$. However, we will often be interested in nonequilibrium settings, where $\pi$ is fixed and different from the distribution of $x$. In that case, the reverse probabilities are sensitive to the distribution of $x$, and not equal to $\widetilde P$ in general \footnote{On the other hand, $P$ can be viewed as the coarse-grained evolution of an underlying dynamical system, as in \Cref{sec:apxmarkov}. Suppose that at a time $t_0$, the system's state is set to a continuous distribution over its phase space. There is evidence to suggest that the resulting coarse-grained trajectory will evolve forward by $P$ at times $t > t_0 + t_m$, and backward by $\widetilde P$ at times $t < t_0 - t_m$. The unpublished manuscript \citep{ebtekar2021information} proves this for certain extensions of the multibaker map, establishing a time-reversal symmetry centered near $t_0$. The ``microscopic mixing time'' $t_m$ depends on the initial distribution's smoothness, and can be far shorter than the time needed to reach macroscopic equilibrium (i.e., ``heat death''). We can think of $t_0$ as analogous to the Big Bang, and $\widetilde P$ as a coarse-graining of the conjectured CPT-inverted dynamics preceding it \citep{boyle2018cpt,boyle2022big,ebtekar2024modeling}.}.

For the remainder of this article (except in \Cref{sec:apxrandom}), we take $\mathcal X=\mathcal Y$. If in addition, $P\pi = \pi$, then we say $P$ is \textbf{$\pi$-stochastic}, or $\pi$ is \textbf{stationary} for $P$. In that case, $\widetilde P$ is also $\pi$-stochastic because $\widetilde P\pi = \widetilde P(P\pi)=\pi$. \textbf{Doubly stochastic} is a common synonym for $\sharp$-stochastic, where $\sharp$ denotes the \textbf{counting measure}, i.e., $\sharp(x) := 1$ for all $x$.

Finally, we say that $P$ satisfies \textbf{detailed balance} with respect to $\pi$ if
\begin{equation}
\label{eq:detailedbalance}
\forall x,y\in\mathcal X,\quad P(x,\,y)\pi(y) = P(y,\,x)\pi(x).
\end{equation}
This is a strong condition, equivalent to having both $P\pi=\pi$ and $P=\widetilde P$.

In physical settings, $\pi$ is the Liouville measure, and $\widetilde P$ is the coarse-grained dynamics after CPT inversion. Therefore, detailed balance occurs when all parity-odd microvariables (e.g., momenta) are coarse-grained away, leaving only bidirectional transitions \citep{shiraishi2019fundamental,shiraishi2023introduction}. In this article, $\pi$ is stationary by Liouville's theorem, but we allow detailed balance to fail; i.e., $P\pi=\pi$, but possibly $P\ne\widetilde P$.

\subsection{Markov processes}
\label{sec:prelimmarkov}

An $\mathcal X$-valued \textbf{stochastic process} is a collection of $\mathcal X$-valued random variables indexed by continuous time $(X_t)_{t\in\reals^+}$, or by discrete time $(X_t)_{t\in\ints^+}$. We say it is a \textbf{time-homogeneous Markov process} if
\begin{equation*}
\label{eq:transitionmatrix}
s\le t\implies \Pr(X_t \mid X_{\le s}) = \Pr(X_t \mid X_s) = P_{t-s}(X_t,\,X_s),
\end{equation*}
where the stochastic matrix $P_{\Delta t}:\mathcal X\times\mathcal X\rightarrow\reals^+$ is called the \textbf{transition matrix} for time steps of duration $\Delta t$. The first equality is called the \textbf{Markov property}, while the second expresses \textbf{time-homogeneity}. For any finite sequence of times $0=t_0<t_1<\cdots<t_n$, the chain rule of probability yields
\begin{equation}
\label{eq:chainrule}
\Pr(X_{t_0},\,\ldots,\,X_{t_n})
= \Pr(X_0)\prod_{i=1}^n P_{t_i - t_{i-1}}(X_{t_i},\,X_{t_{i-1}}).
\end{equation}

A discrete-time Markov process is also called a \textbf{Markov chain}. For $\Delta t\in\ints^+$, we have
\begin{equation}
\label{eq:iterate}
P_{\Delta t} = (P_1)^{\Delta t}.
\end{equation}
Therefore, \labelcref{eq:chainrule} implies that a Markov chain's joint probability distribution is uniquely determined by the distribution of $X_0$ and the matrix $P_1$; these are its \textbf{initial condition} and \textbf{dynamics}, respectively. A continuous-time Markov jump process is described similarly, with $P_{\Delta t}$ ($\Delta t\in\reals^+$) instead being generated by a transition \emph{rate} matrix \citep{shiraishi2023introduction}.

\subsection{Stochastic thermodynamics}
\label{sec:prelimstochastic}

We now present a minimalist version of the stochastic thermodynamics framework \citep{seifert2012stochastic,peliti2021stochastic,sagawa2022entropy,gaspard2022statistical,shiraishi2023introduction}. Classical physics studies continuous-time trajectories over the continuous phase space of a dynamical system. The system's evolution is deterministic and reversible, and preserves the phase space's \textbf{Liouville measure}. By coarse-graining the phase space into discrete cells, we can instead describe the trajectory as a stochastic process (in either discretized or continuous time), that jumps between a countable set of coarse-grained states $\mathcal X$. By integrating the Liouville measure over each cell $x\in\mathcal X$, we obtain a discrete measure $\pi:\mathcal X\rightarrow\reals^+$.

We assume the coarse-grained trajectory is a Markov process, whose transition probabilities $P_{\Delta t}(y,\,x)$ are equal to the fraction of the cell $x$, whose evolution after time $\Delta t$ is in the cell $y$. This assumption is an intense area of study \citep{nicolis1988master,werndl2009deterministic,altaner2012microscopic,figueroa2019almost,figueroa2021markovianization,strasberg2023classicality}, but in stochastic thermodynamics it is commonly taken for granted. It then follows that the Markov process is time-homogeneous and $\pi$-stochastic.

To keep things simple and general, we do not make any sort of steady state or detailed balance assumption. At this stage, we do not even define the concepts of energy or heat. In this generic setting, the \textbf{stochastic entropy} of a state $x$, sampled from a probability distribution $\mu$, relative to the (not necessarily normalizable) stationary measure $\pi$, is defined by
\begin{equation}
\label{eq:stochentropy}
\hat H_\pi(x,\,\mu) := \log\frac{\pi(x)}{\mu(x)}.
\end{equation}

To avoid arithmetic singularities, assume $\pi$ and $\mu$ are nonzero everywhere. We may omit the subscript $\pi$ in cases where it equals the counting measure $\sharp$. In practice (e.g., see \Cref{sec:environment}), the nonuniformity of $\pi$ comes from modeling environment interactions. Thus, we can think of the contributions $\log\frac{1}{\mu(x)}$ and $\log\pi(x)$ as the entropy of a base system and its environment, respectively. The generalized \textbf{Gibbs-Shannon entropy} of $\mu$ is its mean stochastic entropy (i.e., negated Kullback-Leibler divergence) relative to $\pi$:
\begin{equation}
\label{eq:shannon}
H_\pi(\mu) := \expect{\hat H_\pi(X,\,\mu)}_{X\sim\mu} = \sum_{x\in\mathcal X} \mu(x)\log\frac{\pi(x)}{\mu(x)}.
\end{equation}

The stochastic entropy satisfies the following integral fluctuation theorem.

\begin{theorem}
\label{thm:lawH}
Let $X,Y$ be $\mathcal X,\mathcal Y$-valued random variables, and $P(y,\,x) := \Pr(Y=y \mid X=x)$. If the measures $\pi,\mu:\mathcal X\rightarrow\reals^+$, $\nu:\mathcal Y\rightarrow\reals^+$, and $P\pi$ are nonzero everywhere, then,
\begin{equation*}
\expect{2^{\hat H_\pi(X,\,\mu) - \hat H_{P\pi}(Y,\,\nu)} \mid X}
= \frac{\widetilde P\nu(X)}{\mu(X)}.
\end{equation*}
\end{theorem}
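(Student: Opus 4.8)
The plan is to compute the conditional expectation directly from the definitions, expanding the exponent and then summing over the possible values of $Y$. Fix $X=x$. By the definition \labelcref{eq:stochentropy} of the stochastic entropy, $2^{\hat H_\pi(x,\,\mu)} = \pi(x)/\mu(x)$, and for each value $Y=y$ we have $2^{-\hat H_{P\pi}(y,\,\nu)} = \nu(y)/P\pi(y)$. Since $\mu(x)$ does not depend on $y$, it can be pulled out of the conditional expectation, giving
\begin{equation*}
\expect{2^{\hat H_\pi(X,\,\mu) - \hat H_{P\pi}(Y,\,\nu)} \mid X=x}
= \frac{\pi(x)}{\mu(x)} \sum_{y\in\mathcal Y} P(y,\,x)\,\frac{\nu(y)}{P\pi(y)}.
\end{equation*}

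The next step is to recognize the summand as a dual transition probability. By the definition \labelcref{eq:dualP} of the dual matrix, $\widetilde P(x,\,y) = P(y,\,x)\pi(x)/P\pi(y)$, so $P(y,\,x)\,\pi(x)/P\pi(y) = \widetilde P(x,\,y)$, and hence
\begin{equation*}
\frac{\pi(x)}{\mu(x)} \sum_{y\in\mathcal Y} P(y,\,x)\,\frac{\nu(y)}{P\pi(y)}
= \frac{1}{\mu(x)} \sum_{y\in\mathcal Y} \widetilde P(x,\,y)\,\nu(y)
= \frac{\widetilde P\nu(x)}{\mu(x)},
\end{equation*}
where the last equality is just the definition \labelcref{eq:dualpi} of the action of the stochastic matrix $\widetilde P$ on the measure $\nu$. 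Replacing the specific value $x$ by the random variable $X$ throughout recovers the claimed identity.

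The computation is short and essentially bookkeeping; the only point requiring care is making sure every quantity is well-defined, which is exactly what the hypotheses guarantee. The assumption that $\mu$, $\nu$, and $P\pi$ are nonzero everywhere ensures that the ratios $\pi(x)/\mu(x)$, $\nu(y)/P\pi(y)$, and $\widetilde P\nu(x)/\mu(x)$ make sense, and that $\widetilde P$ is defined via \labelcref{eq:dualP}; the assumption on $\pi$ plays the same role for $\hat H_\pi$. I would also note that the sum $\sum_y P(y,x)\nu(y)/P\pi(y)$ converges: each term is nonnegative, and after the rewriting it equals $\widetilde P\nu(x)/\pi(x)$, which is finite term-by-term since $\widetilde P$ is stochastic and $\nu$ is a fixed measure — though in full generality one should remark that if $\nu$ is not normalizable the expectation could be $+\infty$, in which case the identity still holds as an equality in $[0,\infty]$. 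The main (and really the only) obstacle is thus purely notational: keeping straight which argument of $P$ and $\widetilde P$ is the ``from'' state and which is the ``to'' state, so that the application of \labelcref{eq:dualP} is correct.
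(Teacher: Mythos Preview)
Your proof is correct and follows essentially the same route as the paper: expand the stochastic entropies via \labelcref{eq:stochentropy}, rewrite $\pi(x)P(y,x)/P\pi(y)$ as $\widetilde P(x,y)$ using \labelcref{eq:dualP}, and sum to obtain $\widetilde P\nu(x)/\mu(x)$. The only difference is cosmetic ordering (the paper substitutes the dual before writing out the sum), and your additional remarks on well-definedness are accurate but not needed for the paper's purposes.
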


\begin{proof}
By definition,
\begin{align*}
\expect{2^{\hat H_\pi(X,\,\mu) - \hat H_{P\pi}(Y,\,\nu)} \mid X}
&= \expect{\frac{\pi(X)\nu(Y)}{P\pi(Y)\mu(X)} \mid X}
\\&= \expect{\frac{\widetilde P(X,\,Y)\nu(Y)}{P(Y,\,X)\mu(X)} \mid X}
\\&= \sum_{y\in\mathcal Y}P(y,\,X)\frac{\widetilde P(X,\,y)\nu(y)}{P(y,\,X)\mu(X)}
\\&= \frac{\sum_{y\in\mathcal Y}\widetilde P(X,\,y)\nu(y)}{\mu(X)}
\\&= \frac{\widetilde P\nu(X)}{\mu(X)}.
\end{align*}
\end{proof}

In \Cref{thm:lawH}, $\mu$ and $\nu$ need not be related to the distribution of $(X,Y)$. \citet{roldan2023martingales} apply it to obtain martingales, essentially by setting $\mu:=\widetilde P\nu$. However, we choose $\mu$ to be the initial state distribution. Under the law of a $\pi$-stochastic matrix $P$, the distribution then evolves to $P\mu$. Therefore, the \textbf{stochastic entropy production} during a state transition $x\rightarrow y$ is defined by
\begin{align*}
\Delta\hat H_\pi
&:=\hat H_\pi(y,\,P\mu) - \hat H_\pi(x,\,\mu).
\end{align*}
By the law of total expectation, and \Cref{thm:lawH} with $P\pi=\pi$ and $\nu:=P\mu$,
\begin{equation}
\label{eq:stochIFT}
\expect{2^{-\Delta\hat H_\pi}}
= \expect{\expect{2^{-\Delta\hat H_\pi} \mid X}}
= \expect{\frac{\widetilde PP\mu(X)}{\mu(X)}}_{X\sim\mu}
= \sum_{x\in\mathcal X}\widetilde PP\mu(x)
= 1.
\end{equation}
By Jensen's inequality, it follows that the \textbf{Gibbs-Shannon entropy production} is nonnegative:
\[\Delta H_\pi:=H_\pi(P\mu)-H_\pi(\mu)=\expect{\Delta\hat H_\pi}\ge 0.\]

These results are quite general, applying to distributions that are arbitrarily far from equilibrium. Their main downside is that the stochastic \labelcref{eq:stochentropy} and Gibbs-Shannon \labelcref{eq:shannon} entropies depend on a choice of distribution $\mu$, and hence are not well-defined functions of the individual state $x$. While thermodynamic ensembles certainly have their place, \Cref{sec:probability} discusses settings in which an adequate choice of $\mu$ is not readily available. To get an ensemble-free notion of entropy, we now turn to algorithmic information theory.

\subsection{Algorithmic information theory}
\label{sec:prelimait}

A set of strings is \textbf{self-delimiting} if no element is a proper prefix of another. For an arbitrary set $A$, an \textbf{encoding} is an injective function $f:A\rightarrow\bits^*$; we say $f$ is self-delimiting if its range is.

For integers $n\in\ints^+$, a self-delimiting encoding $\pf n\in\bits^*$ is given recursively by
\begin{equation}
\label{eq:prefixfree}
\pf n := \begin{cases}
\mathtt 0 &\text{if }n=0,
\\\mathtt 1\pf{|B(n)|}B(n) &\text{if }n>0,
\end{cases}
\end{equation}
where $B(n)$ is the standard binary encoding of $n$ without its leading $\mathtt 1$. For example,
\[\pf 9
= \mathtt{1}\pf 3\mathtt{001}
= \mathtt{11}\pf 1\mathtt{1}\mathtt{001}
= \mathtt{111}\pf 0\mathtt{1}\mathtt{001}
= \mathtt{11101001}.\]

For strings $x\in\bits^*$, a self-delimiting encoding is given by $\pf x:=\pf{|x|}x$, equal to $\pf{B^{-1}(x)}$ without its leading $\mathtt 1$. For $A,B\subset\bits^*$, let $AB:=\{ab:A\in A,\,b\in B\}$. If $A$ is self-delimiting, the pair $(a,b)$ is uniquely decodable from $ab\in AB$. If both $A$ and $B$ are self-delimiting, then so is $AB$.

A \textbf{prefix machine} $T$ is a computer whose set of valid halting programs $\mathcal P_T\subset \bits^*$ is self-delimiting. Let $T(p)$ denote the output of $T$ on $p\in\mathcal P_T$, and write $T(p)=\emptyset$ for $p\in\bits^*\setminus\mathcal P_T$. We fix a \textbf{universal} prefix machine $U$ with the following property: for every prefix machine $T$, there exists $x_T\in\bits^*$, such that for all $y,p\in\bits^*$,
\[U(\pf y\,\pf{x_T}p) = T(\pf yp).\]
See Hutter \emph{et al.}~\citep{hutter2004universal,hutter2024introduction} for the details of this construction, or \citet{li2019introduction} for a similar approach. Since $\pf yp$ is uniquely decodable, from now on we instead write $(y,p)$.

The \textbf{description complexity} of a string $x\in\bits^*$, given side information $y\in\bits^*$, is
\[K(x\mid y) := \min_p\{|p|:\, U(y,p)=x\}.\]

When $y$ is the empty string, $K(x\mid y)$ becomes the unconditional description complexity $K(x)$. Whenever an encoding is implied, we may write non-string objects in place of $x$ or $y$. For example, a finite set is encoded by a lexicographic listing of its elements. A computable function (or measure or matrix) is encoded by any program that computes it in the sense of \labelcref{eq:computable} below. While the resulting complexity depends on which program is chosen, our derivations remain valid provided that repeated mentions of a function always refer to the same program. We also assume fixed encodings for the countable sets $\ints$, $\mathbb Q$, and $\mathcal X$, so that their elements have well-defined description complexities. Complexities and entropies in this article are measured in units of bits; accordingly, all logarithms implicitly have base $2$.

(In)equalities that hold up to constant additive terms or multiplicative factors are expressed by writing a $+$ or $\times$ on top of the (in)equality sign. For example, $f(x) \lplus g(x)$ and $f(x) \lmul g(x)$ mean $f(x) < c+g(x)$ and $f(x) < c\cdot g(x)$, respectively, for some constant $c$. By ``constant'', we mean that $c$ is a function of only the parameters that we explicitly declared as fixed, such as the universal computer $U$ and the encodings of important sets. We say $x$ is \textbf{simply describable} from an optional context $y$, if $K(x\mid y)\eqplus 0$.

We review some properties of $K$. Since programs are self-delimiting, we have Kraft's inequality
\begin{equation}
\label{eq:kraft}
\sum_{x\in\bits^*} 2^{-K(x\mid y)} < 1.
\end{equation}

There exists a decoder $p$ for \labelcref{eq:prefixfree}, such that for all $n\in\ints^+$, $U(p\pf n) = n$. Hence,
\begin{equation*}
K(n)
\le |p\pf n|
\eqplus |\pf n|
= 1 + \sum_{i=1}^\infty\left\lfloor1 + \log^i n\right\rfloor \lplus \log n + 2\log\log n,
\end{equation*}
where the sum is evaluated as far as the $i$-fold iterated logarithm is nonnegative, and the rightmost inequality assumes $n\ge 2$. Similarly, for $x\in\bits^*$, $K(x)\lplus|\pf x|\lplus |x| + 2\log|x|$.

Whereas the Gibbs-Shannon entropy measures the mean information content per independent sample of a probability distribution \citep{thomas2006elements}, the description complexity measures the information content of an individual string without reference to any distribution. Naturally, it satisfies analogous relations \citep{grunwald2004shannon}. We make frequent use of the following:
\begin{align*}
K(x\mid y) \lplus K(x)
\lplus K(x,y) \eqplus K(y,x) \eqplus K(y) + K(x\mid y,K(y)) \lplus K(y) + K(x\mid y).
\end{align*}

The algorithmic \textbf{mutual information} between $x$ and $y$ is defined by
\begin{equation}
\label{eq:mutinf}
I(x\mi y) := K(x)+K(y)-K(x,y) \eqplus K(x)-K(x\mid y,K(y)).
\end{equation}
The conditional mutual information $I(x\mi y\mid z)$ is defined by conditioning on $z$ every $K$ term in \labelcref{eq:mutinf}. It can be shown to satisfy a data processing identity \citep{gacs2021lecture}:
\begin{equation}
\label{eq:dataproc}
I(x\mi (y,z)) \eqplus I(x\mi z) + I(x\mi y\mid z,K(z)).
\end{equation}

Finally, for any computable probability measure $\mu$, and $\delta > 0$,
\begin{equation*}
\label{eq:universal}
%\delta\cdot 2^{-K(x)} \lmul
\delta\cdot 2^{-K(x\mid \mu)}
< \mu(x)
\lmul 2^{-K(x\mid \mu)}
%\lmul 2^{K(\mu)}\cdot 2^{-K(x)}
,
\end{equation*}
where the first inequality fails on a set of $\mu$-probability less than $\delta$ \footnote{This follows by summing over those $x$ which violate the inequality, and applying Kraft's inequality \labelcref{eq:kraft}.}, while the second holds for all $x\in\bits^*$ \citep[\S4.3]{li2019introduction}. Hence, with $\mu$-probability greater than $1-\delta$,
\begin{equation}
\label{eq:loguniversal}
K(x\mid \mu)
\lplus \log\frac{1}{\mu(x)}
< K(x\mid \mu) + \log\frac{1}{\delta}.
\end{equation}
In particular, setting $\mu$ to a uniform distribution on any finite set $A\subset\bits^*$ reveals that $K(x\mid A) \lplus \log |A|$, with $K(x\mid A) \eqplus \log |A|$ for all but a constant fraction of $x\in A$. For ``simple'' $\mu$ satisfying $K(\mu)\approx 0$, we have $K(x\mid\mu)\approx K(x)$; hence, \labelcref{eq:loguniversal} lets us interpret $K(x)$ as a universal variant of the Shannon codelength, that does not depend on a choice of $\mu$.

Note that we stated all these relations with respect to an arbitrary universal computer $U$. By applying them to the universal computer $U_z(y,p) := U((z,y),p)$, we see that they remain valid when every mention of $K$ is conditioned on any additional data $z\in\bits^*$.

Short programs that output $x$ serve as compressed representations of $x$. Consider the \textbf{universal compression algorithm}: for a fixed time budget, simulate all programs of length up to about $|x| + 2\log|x|$, and then output the shortest program that halted with output $x$. As the time budget increases to infinity, the resulting program length decreases to $K(x)$. Thus, we may think of $K(x)$ as the optimum lossless compression achievable for $x$, in the limit of infinite runtime. In the spirit of Occam's razor, suppose we think of the shortest program as ``explaining'' $x$ \citep{rathmanner2011philosophical}; then, we see that explanations are falsified in finite time, but never proven, as we can never rule out the possibility that $x$'s shortest program is among those that have yet to halt.

A function $f:\bits^*\rightarrow\bits^*$ (or between countable sets associated with encodings) is \textbf{computable} if there exists a prefix machine $T$, such that $T(x) = f(x)$ for all $x\in\bits^*$. The description complexity $K$ is not computable in this sense; however, our universal compression algorithm is easily adapted to compute a decreasing integer sequence that approaches $K(x)$ from above. Hence, we say that $K$ is \textbf{upper semicomputable}.

To extend these concepts to real-valued functions, we say $f:\bits^*\rightarrow\reals$ is \textbf{lower (upper) semicomputable} if there exists a computable function $g:\bits^*\times\ints^+\rightarrow\mathbb Q$, such that $g(x,\,\cdot)$ is monotonically increasing (decreasing), and
\[\lim_{n\rightarrow\infty} g(x,\,n) = f(x).\]
The real-valued function $f$ is \textbf{computable} if it is both lower and upper semicomputable; or equivalently, if there exists a computable function $g$ such that, for any desired level of precision $n$,
\begin{equation}
\label{eq:computable}
\left|g(x,\,n) - f(x)\right| < 2^{-n}.
\end{equation}

\section{Theoretical analysis}
\label{sec:theory}

\subsection{Coarse-grained algorithmic entropy}
\label{sec:algoentropy}

As in \Cref{sec:prelimstochastic}, we model a physical system's coarse-grained trajectory by a Markov process on the state space $\mathcal X$. We also fix a canonical encoding by which to identify $\mathcal X$ with a subset of $\bits^*$. It will be convenient to define a conditional version of \citet{gacs1994boltzmann}' coarse-grained algorithmic entropy \labelcref{eq:gacs}. Therefore, let the \textbf{algorithmic entropy} of $x\in\mathcal X$, given side information $z\in\bits^*$, relative to the stationary measure $\pi:\mathcal X\rightarrow\reals^+$, be
\begin{equation}
\label{eq:entropy}
S_\pi(x\mid z) := K(x\mid z) + \log \pi(x).
\end{equation}
Note that \labelcref{eq:entropy} is formally identical to the stochastic entropy \labelcref{eq:stochentropy}, if we replace the prior $\mu(x)$ with $2^{-K(x\mid z)}$.

The equilibrium properties of $S_\pi$ are fairly straightforward. Assuming $Z:=\sum_{x\in\mathcal X}\pi(x)<\infty$, the normalization $\pi(x)/Z$ is an equilibrium ensemble. Further assuming that $\pi(x)/Z$ is simply describable from $z$, substituting it into \labelcref{eq:loguniversal} yields
\begin{equation}
\label{eq:Kequib}
K(x\mid z)
\lplus \log\frac{Z}{\pi(x)}
< K(x\mid z) + \log\frac 1\delta,
\end{equation}
where the first inequality holds for every state $x\in\mathcal X$, while the second fails with equilibrium probability less than $\delta$. Define the $\pi$-randomness deficiency \footnote{This concept originates in the algorithmic randomness literature \citep{levin1976uniform,vereshchagin2004kolmogorov,gacs2005uniform}. In the language of statistical hypothesis testing, randomness deficiency is the logarithm of an \emph{e-value} or \emph{likelihood ratio} between a null hypothesis $\pi(x)/Z$, and a ``universal'' alternative hypothesis $2^{-K(x\mid z)}$  \citep{vovk2020non,ramdas2023game,ramdas2024hypothesis}. Thus, $J_\pi(x\mid z)$ quantifies how implausible it is that $x$ was sampled from the equilibrium distribution, if $z$ was known prior to sampling.}, or \textbf{negentropy}, by
\begin{equation}
\label{eq:negentropy}
J_\pi(x\mid z)
:= \log Z - S_\pi(x\mid z)
= \log\frac{Z}{\pi(x)} - K(x\mid z).
\end{equation}
Then, \labelcref{eq:Kequib} becomes simply
\begin{equation}
\label{eq:negentropysaturates}
0 \lplus J_\pi(x\mid z) < \log\frac 1\delta.
\end{equation}
Therefore, the algorithmic entropy $S_\pi$ has a maximum value of about $\log Z$, and at equilibrium it concentrates near this maximum. Under any of the standard ergodic conditions that make a Markov process converge to $\pi$, it follows that the algorithmic entropy tends toward this maximum, and then rarely fluctuates away from it.

In the general nonequilibrium setting, the randomness conservation theorem of \citet{levin1984randomness} says that the entropy tends not to decrease. For our purposes, a particular formulation is helpful, which we present as \Cref{thm:lawS} in \Cref{sec:apxrandom}. There, we consider the transition matrix $P$ that transforms a random earlier state $X$ of the process to a later state $Y$. If $P$ is $\pi$-stochastic, with both $\pi$ and $P$ being computable, then \Cref{thm:lawS} says that
\begin{equation}
\label{eq:flucS}
\expect{2^{S_\pi(X\mid\widetilde P)-S_\pi(Y\mid\widetilde P)}} \lmul 1.
\end{equation}
Moreover, for all $\delta > 0$, with probability greater than $1-\delta$,
\begin{equation}
\label{eq:lawS}
S_\pi(X\mid\widetilde P)-S_\pi(Y\mid\widetilde P) \lplus \log\frac 1\delta.
\end{equation}

Two remarks are in order. Firstly, unlike its stochastic analogue \labelcref{eq:stochIFT}, the algorithmic fluctuation relation \labelcref{eq:flucS} need not hold with equality. For example, suppose $P(y,\,x):=1/|\mathcal X|$ for all $x,y$ in some large finite set $\mathcal X$. If the earlier state has $K(X\mid\widetilde P)\eqplus 0$ with probability one, then
\begin{equation*}
\label{eq:randomizerexample}
\expect{2^{S_\sharp(X\mid\widetilde P)-S_\sharp(Y\mid\widetilde P)}}
\eqmul \expect{2^{-K(Y\mid\widetilde P)}}
= \frac{1}{|\mathcal X|}\sum_{y\in\mathcal X}2^{-K(y\mid\widetilde P)}
< \frac{1}{|\mathcal X|}
\ll 1.
\end{equation*}

Secondly, \labelcref{eq:lawS} bounds the increase in entropy by a constant plus $\log(1/\delta)$. The constant term comes from basic properties of $K$, and can be made very small by an appropriate choice of reference computer $U$ \citep[\S3.9]{li2019introduction}. The $\log(1/\delta)$ term is also quite small: supposing we tolerate $\delta = 2^{-1000}$, it amounts to a kilobit, which is negligible in terms of physical units (recall \labelcref{eq:conversion}). Therefore, at macroscopic scales, we can effectively say that entropy never decreases.

The conditional parameter, consisting of a program for $\widetilde P$, remains somewhat of a nuisance. We want a fixed entropy function with which to compare states encountered throughout our system's evolution. Conditioning on $\widetilde P$ would be fine if it were solely determined by the laws of physics; but unfortunately, it also depends on the time elapsed between the two snapshots $X,Y$ of our system. In \Cref{sec:refine}, we derive the appropriate correction for this dependence.

In the meantime, we deal with the more urgent matter of choosing a suitable coarse-graining, and equipping it with a string encoding. We consider two types of coarse-graining: \textbf{macroscopic} and \textbf{mesoscopic}. \citet{gacs1994boltzmann} focuses primarily on the macroscopic type, letting the elements of $\mathcal X$ be the Boltzmann macrostates. In other words, given a physical system, we imagine measuring a few specially designated macrovariables, such as the temperature, pressure, and chemical composition of each of its parts, to a reasonable number of significant figures. The phase space is then partitioned into cells corresponding to every possible joint measurement outcome.

Since each cell $x\in\mathcal X$ is determined by the values of its macrovariables, any standard numerical encoding (e.g., binary scientific notation) would work. $K(x\mid\widetilde P)$ then is quite small: even writing say a hundred macrovariables, each to a dozen significant figures, requires only a few kilobits. Therefore, the algorithmic entropy \labelcref{eq:entropy} of a \emph{macrostate} $x$ simplifies to
\[S_\sharp(x\mid\widetilde P) \approx \log\pi(x),\]
which is just its Boltzmann entropy.

For physical systems whose Boltzmann macrostates mix sufficiently rapidly, we expect this coarse-graining to be approximately Markovian. However, there are systems whose Boltzmann macrostates are not ergodic. For example, computer systems are heavily dependent on the stability of their memory states. As a result, every memory state must be treated separately, even if they occur at the same ambient temperature, pressure, and so on. Here, we might use the coarse-graining $\mathcal X:=\bits^m$, corresponding to all possible settings of an $m$-bit memory.

In more general settings, we can obtain a mesoscopic coarse-graining by rounding or truncating the values of the canonical microvariables \citep[\S12]{ehrenfest1959conceptual} \citep[\S2.7]{gaspard2022statistical}. For an $N$-particle Hamiltonian system, each cell $x\in\mathcal X$ is determined by the individual particles' $3$-dimensional positions and momenta, to a high but finite level of precision. In phase space, these cells are tiny $6N$-dimensional hypercubes. The string encoding consists of $6N$ numerical positions and momenta, written in any standard format. The cells have equal Liouville measure, making $\pi$ constant; by normalizing, we can set $\pi:=\sharp$, so that $\log\sharp(x)=0$ and $\widetilde P(x,\,y) = P(y,\,x)$. Therefore, the algorithmic entropy \labelcref{eq:entropy} of a \emph{mesostate} $x$ simplifies to
\[S_\sharp(x\mid\widetilde P) = K(x\mid\widetilde P)\eqplus K(x\mid P),\]
which is just the description complexity of its canonical microvariables.

We apply the macroscopic coarse-graining to \textbf{reservoir} systems whose macrostates mix rapidly, and the mesoscopic coarse-graining to all other systems. We can think of the mesostates as a refinement of the macrostates: each Boltzmann macrostate corresponds to a large set $B\subset\mathcal X$ of mesostates, with Boltzmann entropy $\log|B|$. Since $K(B)$ is small, \labelcref{eq:loguniversal} implies agreement between the algorithmic and Boltzmann entropies: for the vast majority of mesostates $x\in B$,
\begin{equation}
\label{eq:boltzmann}
K(x\mid P)\approx K(x\mid P,B)\eqplus\log |B|.
\end{equation}

One advantage of the mesoscopic description is that the macrovariables need not be chosen in advance. \labelcref{eq:boltzmann} holds not only for the classical Boltzmann macrostates, but for every simply describable finite set $B\subset\mathcal X$ containing $x$. For example, the entropy of a bookshelf may be estimated by taking $B$ to be the set of configurations compatible with how the books are sorted. For all $x\in B\subset\mathcal X$,
\begin{equation}
\label{eq:bestset}
K(x\mid P) \lplus K(B\mid P) + \log|B|,
\end{equation}
since $x$ is uniquely identified by a description of $B$, along with a numerical index of size $\log|B|$. In particular, the Boltzmann entropy is only an upper bound on $K(x\mid P)$. The latter may be smaller if the state $x$ has additional structure not captured by the Boltzmann macrovariables.

Generalizing the fixed-size index to a variable-size Shannon code, we also have that for all computable probability measures $\mu:\mathcal X\rightarrow\reals^+$,
\begin{equation}
\label{eq:bestmeasure}
K(x\mid P) \lplus K(\mu\mid P) + \log\frac{1}{\mu(x)}.
\end{equation}
In algorithmic statistics \citep[\S5.5]{li2019introduction} \citep{gacs2001algorithmic,wallace2005statistical,vereshchagin2016algorithmic}, the right-hand sides of \labelcref{eq:bestset,eq:bestmeasure} are minimized in order to \emph{infer} which set $B$ or ensemble $\mu$ best describes $x$. There even exist so-called \textbf{nonstochastic} strings $x$, for which no simply describable $B$ or $\mu$ makes the inequalities tight \citep{gacs2001algorithmic}. Thus, the algorithmic entropy takes into account much more general descriptions of states, than do the traditional Boltzmann and Gibbs-Shannon entropies.

\subsection{Comparison against Gibbs-Shannon entropy}
\label{sec:probability}

One reason why entropy is often regarded as mysterious is that its definitions present conceptual challenges. The Gibbs-Shannon entropy is a function of ensembles, and therefore makes no comment on individual physical states. The algorithmic entropy, on the other hand, is defined for all individual states, but fails to be computable. While every program that outputs $x$ yields an \emph{upper} bound on $K(x)$, there is \emph{no} $x$ for which we can compute a large \emph{lower} bound. For if there were, then a small algorithm could search for such $x$ and output the first one it finds, in contradiction to $K(x)$ being large. This is Chaitin's incompleteness theorem \citep[\S4]{chaitin1974information} \citep[Theorem 1.5.2]{gacs2021lecture}.

While it appears to be a serious shortcoming, the incomputability of $K$ was shown to \emph{benefit} its application to universal prediction \citep[\S3]{solomonoff2009algorithmic}. To see how incomputability likewise benefits thermodynamics, suppose we instead define the entropy as some state function $K'$, for which large lower bounds \emph{are} computable. Then once again, a short program $p$ can search for and output an $x$ for which $K'(x)$ is large. Identifying $p$ with its smallest physical implementation, we expect $K'$ to satisfy the ``physical continuity'' property $K'(p,\,y)\approx K'(y)$ for all $y$. Any program that computes $x$ can also be used to erase a preexisting copy of $x$ \footnote{Following \citet{bennett1982thermodynamics}, the sequence of computations is $(p,x,0,0)\rightarrow(p,x,x,g)\rightarrow(p,0,x,g)\rightarrow(p,0,0,0)$. First, the program is run to produce a second copy of $x$ along with a computation trace $g$; then, the copy of $x$ is used to reversibly erase the original; finally, the program is run backward to clean up its outputs.}. As a result, $K'$ would decrease substantially, from $K'(p,\,x)\approx K'(x)$ to $K'(p,\,0)\approx 0$. Hence, such a $K'$ cannot have a second law.

Having seen why we need $K$, we are left with the practical matter of estimating it. While lower bounds cannot be computed for any \emph{individual} $x$, \labelcref{eq:loguniversal} provides a lower bound that holds for \emph{most} samples from any given distribution $\mu$. Moreover, by adding $\expect{\log\pi(X)}$ to both sides of \labelcref{eq:Hisanaverage} and presenting it alongside the definition \labelcref{eq:shannon}, we obtain
\begin{equation*}
H_\pi(\mu)
:= \expect{\hat H_\pi(X,\,\mu)}_{X\sim\mu}
\eqplus\expect{S_\pi(X\mid\mu)}_{X\sim\mu}.
\end{equation*}

Ignoring the middle expression for a moment, this says the Gibbs-Shannon entropy is an \emph{average over $\mu$}, of the algorithmic entropy \emph{conditioned on prior knowledge of $\mu$}. The two stipulations of averaging and prior knowledge cast doubt on whether $H_\pi$ captures anything objective about physical states. Fortunately, in a wide range of practical settings, there is a natural choice of ensemble $\mu:\mathcal X\rightarrow\reals^+$, corresponding to the individual underlying state $x\in\mathcal X$, such that a more direct equivalence holds:
\begin{equation}
\label{eq:HequalK}
H_\pi(\mu)
\approx\hat H_\pi(x,\,\mu)
\approx S_\pi(x).
\end{equation}

The idea is as follows. We consider $\mu$ to be a useful summary of $x$ when three conditions hold:
\begin{enumerate}
\item $\mu$ is simply describable, i.e., computable with $K(\mu)\approx 0$.
\item $\mu$ has a notion of \emph{typicality}: most of its samples share some characteristics of interest.
\item $x$ is among the typical samples with those characteristics.
\end{enumerate}

Condition 1 already implies
\begin{equation}
\label{eq:HtoK1}
S_\pi(x\mid\mu)\approx S_\pi(x),
\end{equation}
removing the need to condition on prior knowledge of $\mu$. It remains only to get rid of the averaging.

Adding $\log\pi(x)$ on all sides of \labelcref{eq:loguniversal} implies that
\begin{equation}
\label{eq:HtoK2a}
\Pr_{X\sim\mu}\left(S_\pi(X\mid\mu) \approx \hat H_\pi(X,\,\mu)\right)\approx 1,
\end{equation}
where we've hidden the tolerances behind approximation ($\approx$) signs for brevity. We take condition 2 (typicality) to mean that the stochastic entropy concentrates near its expectation:
\begin{equation}
\label{eq:HtoK2b}
\Pr_{X\sim\mu}\left(\hat H_\pi(X,\,\mu)\approx H_\pi(\mu)\right)\approx 1.
\end{equation}
In many practical settings, \labelcref{eq:HtoK2b} is derived as a consequence of the law of large numbers \citep[\S3]{thomas2006elements}.

Condition 3 is that $x$ is typical for $\mu$, which we take to mean that $x$ belongs to the intersection of the high-probability sets given by \labelcref{eq:HtoK2a,eq:HtoK2b}. That is,
\begin{equation}
\label{eq:HtoK3}
H_\pi(\mu)
\approx \hat H_\pi(x,\,\mu)
\approx S_\pi(x\mid\mu).
\end{equation}
When all three conditions hold, \labelcref{eq:HtoK1,eq:HtoK3} together imply \labelcref{eq:HequalK}; that is, the Gibbs-Shannon, stochastic, and algorithmic entropies all approximately coincide!

As an example, let $\mu$ be the canonical ensemble for a mechanically isolated container of an ideal gas at thermodynamic equilibrium. This ensemble's simple description meets condition 1, while the gas particles' independence and large number ensure condition 2. The overwhelming majority of states encountered at equilibrium are typical in the sense of condition 3. For those states, we conclude that the equivalence \labelcref{eq:HequalK} holds. The same argument applies to nonequilibrium ensembles, provided that they are simply describable and satisfy the concentration property \labelcref{eq:HtoK2b}.

On the other hand, we now consider three examples that break each respective condition. We use them to argue that, when the equivalence \labelcref{eq:HequalK} does not hold, $S_\pi(x)$ is a more physically correct measure of entropy than $H_\pi(\mu)$. For simplicity, let $\pi:=\sharp$, so that $S_\pi = K$.

The first violation, where $K(\mu)\gg 0$, is exemplified by letting $\mu$ be the point mass on a high-complexity state $x$. In this case, $K(x)\eqplus I(x\mi \mu)\eqplus K(\mu)\gg H(\mu)=0$. Intuitively, the Gibbs-Shannon approach takes $\mu$ as an exogenous parameter to specify that we \emph{know} the value of $x$, and can therefore clear it reversibly. In reality, in order to use any sort of knowledge, we must have it physically encoded in a memory device such as our brain. A strength of the algorithmic approach is that it naturally models knowledge as an endogenous part of the physical system. In \Cref{sec:maxwell}, we see how to model a Maxwell's demon that acquires, and then uses, information about $x$.

The second violation, where $\mu$ lacks a typical set, is exemplified by a robot that flips a hidden coin to decide whether or not to drain its battery. The probabilistic state $\mu$ of the battery and surrounding heat reservoir becomes an equal mixture of the two ensembles corresponding to a full or empty battery. Hence, $H(\mu)$ reaches an intermediate level: a free energy calculation in terms of Gibbs-Shannon entropy would suggest that the robot can do about half a charge worth of work. In reality, the robot will do either zero or a full charge of work. $H(\mu)$ merely predicts the \emph{average} work output among states sampled from $\mu$. In contrast, we see in \Cref{sec:fluctuation} that $K(x)$ predicts the work output from a \emph{specific} state $x$, whether it be zero or a full charge.

The final violation occurs when $x$ is atypical for $\mu$, in such a way that $K(x)\ll\hat H(x,\,\mu)$. Whether this is due to bad inductive priors or because $x$ is nonstochastic \citep{gacs2001algorithmic}, the Shannon code for $\mu$ is then suboptimal for $x$. As a result, a Gibbs-Shannon free energy calculation would underestimate the work that a suitable compression algorithm can extract from $x$. If we take this calculation too seriously, the compression algorithm would appear to violate the second law of thermodynamics.

It is interesting to observe that the so-called \textbf{universal measure} $\mathbf{m}(x):=2^{-K(x)}$ has a ``Shannon codelength'' of precisely $\hat H(x,\,\mathbf m) = K(x)$. This makes it useful as an inductive prior \citep{ebtekar2021information,hutter2004universal,rathmanner2011philosophical,wolpert2002supervised,muller2020law,hutter2024introduction}. However, regardless of whether we normalize $\mathbf{m}$, its Gibbs-Shannon entropy is physically meaningless because it lacks the concentration property \labelcref{eq:HtoK2b} \footnote{Indeed, if $\mathcal X$ is simply describable (as a subset of $\bits^*$) and large, then $H(\mathbf m)$ is also large. The program describing $\mathcal X$ enumerates its elements, of which the first $x\in\mathcal X$ satisfies $\hat H(x,\,\mathbf m) = \log\frac{1}{\mathbf m(x)} = K(x) \eqplus K(\mathcal X)\eqplus 0$. Thus, $x$ has substantial ``probability'' $\mathbf m(x)$, despite having $\hat H(x,\,\mathbf m)\ll H(\mathbf m)$.}. For additional comparisons between the probabilistic \citep{thomas2006elements} and algorithmic \citep{li2019introduction} flavors of information theory, see \citep{grunwald2004shannon,gacs2001algorithmic,rathmanner2011philosophical,austern2019gaussianity}. We adopt the view of \citet{kolmogorov1983combinatorial}, who considered algorithmic descriptions to be conceptually prior to (and more general than) probabilistic ones. In his words:

\begin{quote}
``Information theory must precede probability theory, and not be based on it. By the very essence of this discipline, the foundations of information theory have a finite combinatorial character.''
\end{quote}

\subsection{Interactions with general reservoirs}
\label{sec:environment}

Now, we specialize our framework to the commonly studied setting in thermodynamics, in which a \textbf{base system} interacts with an environment composed of one or more \textbf{reservoir systems}. Suppose the $i$'th reservoir's macrostate is determined by its \textbf{energy} $E_i\in\reals$, and possibly some additional macrovariables (e.g., volume and particle count) that we collectively denote by $\mathbf V_i\in\reals^{d_i}$ ($d_i\in\ints^+$). Let $\pi_i(E_i,\mathbf V_i)$ denote the Liouville measure of this macrostate, so that
\[B_i(E_i,\mathbf V_i)
:= k_B\ln\pi_i(E_i,\mathbf V_i)\]
is its Boltzmann entropy in physical units \footnote{For an infinite reservoir, the energy, volume, and Boltzmann entropy would be infinite. Fortunately, we only care about relative changes in these quantities, so we can normalize them to be finite.}. Our coarse-graining formalism restricts the macrovariables to a discrete set of values; nonetheless, if $B_i$ is approximately linear over typical increments in $(E_i,\mathbf V_i)$, then we can model it as a differentiable function.

Define the \textbf{temperature} $T_i$ by
\begin{equation}
\label{eq:temperature}
\frac{1}{T_i} := \left(\frac{\partial B_i}{\partial E_i}\right)_{\mathbf V_i}\!\!.
\end{equation}
At any state $(E_i,\,\mathbf V_i)$ for which $T_i\ne 0$, the implicit function theorem lets us locally write the energy as a differentiable function $E_i(B_i,\,\mathbf V_i)$, with
\begin{equation}
\label{eq:partials}
dE_i = T_i\, dB_i + \left(\frac{\partial E_i}{\partial\mathbf V_i}\right)_{B_i}\!\!\boldsymbol\cdot d\mathbf V_i.
\end{equation}

Thus, the flow of energy into the reservoir is a sum of two contributions. The \textbf{heat flow} $\dinexact Q_i$ is the energy transferred via microscopic degrees of freedom:
\begin{equation}
\label{eq:heat}
\dinexact Q_i := T_i\, dB_i = k_BT_i\ln 2\cdot d\log\pi_i.
\end{equation}

In our inclusive approach, there is no external driving. Instead, the \textbf{work} $\dinexact W_i$ is done by the reservoir; it is the energy transferred via changes in the macrovariables $\mathbf V_i$:
\begin{equation}
\label{eq:work}
\dinexact W_i = -\left(\frac{\partial E_i}{\partial\mathbf V_i}\right)_{B_i}\!\!\boldsymbol\cdot d\mathbf V_i.
\end{equation}

A reservoir whose only macrovariable is energy (i.e., with $d_i=0$) cannot exchange work, and is known as a \textbf{heat reservoir}. Conversely, a reservoir whose $\pi_i$ is a constant function cannot exchange heat, and is known as a \textbf{work reservoir}. Substituting \labelcref{eq:heat,eq:work} into \labelcref{eq:partials} yields the \textbf{first law of thermodynamics}
\begin{equation*}
dE_i = \dinexact Q_i - \dinexact W_i;
\end{equation*}
or after integrating over a given time interval,
\begin{equation}
\label{eq:firstlaw}
\Delta E_i = Q_i - W_i.
\end{equation}

From now on, we use the bold lowercase variable
\[\mathbf x := (x,E_1,\mathbf V_1,\ldots,E_m,\mathbf V_m)\]
to denote the joint coarse-grained state of our base and reservoir systems. It consists of the base system's mesostate $x\in\mathcal X$, which we assume to be of unit Liouville measure, along with the macrostates $(E_i,\,\mathbf V_i)$ of $m$ reservoirs. Thus, the mesostates $x$, the energies $E_i$, the macrovariables $\mathbf V_i$, and the temperatures $T_i$ are all implicitly functions of the joint state $\mathbf x$. The mixing within macrostates should be much faster than the transitions between them; this is the main physical assumption which enables us to treat $\mathbf x$ as a Markovian state.

Its joint Liouville measure
\begin{equation}
\label{eq:jointmeasuregeneral}
\pi(\mathbf x) := \prod_{i=1}^m \pi_i(E_i,\,\mathbf V_i)
\end{equation}
is stationary with respect to the dynamics $P$. Hence, the joint algorithmic entropy \labelcref{eq:entropy} expands to
\begin{equation}
\label{eq:jointentropygeneral}
S_\pi(\mathbf x\mid\widetilde P)
:= K(\mathbf x\mid\widetilde P) + \sum_{i=1}^m \log\pi_i(E_i,\,\mathbf V_i).
\end{equation}
Assuming the macrovariables are simply describable, we have
\[K(\mathbf x\mid\widetilde P)
\approx K(x\mid\widetilde P)
= S_\sharp(x\mid\widetilde P),
\qquad\log\pi_i(E_i,\,\mathbf V_i)
\approx S_{\pi_i}(E_i,\,\mathbf V_i\mid\widetilde P).\]
In other words, the joint algorithmic entropy \labelcref{eq:jointentropygeneral} is approximately the sum of the individual systems' algorithmic entropies. Motivated by these approximations, during a joint state transition $\mathbf x\rightarrow\mathbf y$, we define the base system's \textbf{entropy gain} by
\begin{equation*}
\Delta K(\mathbf x\rightarrow\mathbf y)
:= K(\mathbf y\mid\widetilde P) - K(\mathbf x\mid\widetilde P).
\end{equation*}

It can be decomposed into a sum
\begin{equation}
\label{eq:eplusi}
\Delta K(\mathbf x\rightarrow\mathbf y)
= \Delta_e K(\mathbf x\rightarrow\mathbf y)
+ \Delta_i K(\mathbf x\rightarrow\mathbf y),
\end{equation}
of the (reversible) \textbf{entropy flow} from the environment
\begin{equation}
\label{eq:entropyflow}
\Delta_e K(\mathbf x\rightarrow\mathbf y)
:= \log\pi(\mathbf x) - \log\pi(\mathbf y),
\end{equation}
and the (irreversible) \textbf{entropy production}
\begin{equation}
\label{eq:entropyproduction}
\Delta_i K(\mathbf x\rightarrow\mathbf y)
:= S_\pi(\mathbf y\mid\widetilde P)-S_\pi(\mathbf x\mid\widetilde P).
\end{equation}

This completes our core definitions. In practice, the stochastic thermodynamics literature seldom mentions the reservoir macrovariables. Instead, it labels the different means by which the base system can transition from $x$ to $y$, by \textbf{mechanisms} $i$ \citep{shiraishi2023introduction}, such that the triple $(\mathbf x,\,y,\,i)$ uniquely determines the resulting joint state $\mathbf y$, and $(x,\,y,\,i)$ determines the transition probability
\[P^{(i)}(y,\,x)
:= P(\mathbf y,\,\mathbf x).\]
Exactly one mechanism occurs per time step, and each has a different outcome, so that
\[\forall\mathbf x,\quad \sum_{i,y} P^{(i)}(y,\,x)
= \sum_{\mathbf y} P(\mathbf y,\,\mathbf x)
= 1.\]

For example, in settings where at most one reservoir changes at a time, $i$ would be the index of this reservoir. If the macrovariables of reservoir $i$ are conserved quantities, their changes are equal and opposite to those of the base system during any transition $x\rightarrow y$. Now writing
\begin{align*}
\widetilde P^{(i)}(x,\,y)
&:= \widetilde P(\mathbf x,\,\mathbf y)
= \frac{P(\mathbf y,\,\mathbf x)\pi(\mathbf x)}{\pi(\mathbf y)},
\end{align*}
the entropy flow \labelcref{eq:entropyflow} can be expressed in terms of just the base system:
\begin{equation}
\label{eq:entropyflowmechanism}
\Delta_e K(\mathbf x\rightarrow\mathbf y)
= \log\frac{\pi(\mathbf x)}{\pi(\mathbf y)}
= \log\frac{\widetilde P(\mathbf x,\,\mathbf y)}{P(\mathbf y,\,\mathbf x)}
= \log\frac{\widetilde P^{(i)}(x,\,y)}{P^{(i)}(y,\,x)}.
\end{equation}
By \labelcref{eq:eplusi,eq:entropyflowmechanism}, the entropy production is given by
\begin{align}
\label{eq:entropyproductionmechanism}
\Delta_i K(\mathbf x\rightarrow\mathbf y)
&= K(\mathbf y\mid\widetilde P) - K(\mathbf x\mid\widetilde P) + \log\frac{P^{(i)}(y,\,x)}{\widetilde P^{(i)}(x,\,y)}
\\&\approx K(y\mid\widetilde P) - K(x\mid\widetilde P) + \log\frac{P^{(i)}(y,\,x)}{\widetilde P^{(i)}(x,\,y)}\nonumber.
\end{align}

We remark that the detailed balance condition $P=\widetilde P$, which expands to \labelcref{eq:detailedbalance}, is equivalent to having $P^{(i)}=\widetilde P^{(i)}$ for all $i$, which the literature refers to as \textbf{local detailed balance} \citep{shiraishi2019fundamental,maes2021local,shiraishi2023introduction}. In this article, we do not assume (local) detailed balance. We also opt for the more explicit notation $P(\mathbf y,\,\mathbf x)$, in terms of joint states rather than mechanisms.

Having defined our key thermodynamic quantities, we now derive relationships between them, by drawing upon the mathematical results in \Cref{sec:apxrandom}. \Cref{thm:lawS} formulates the second law of thermodynamics as an \emph{integral fluctuation} inequality. It says that, regardless of the initial state distribution, the entropy production \labelcref{eq:entropyproduction} has a strong statistical tendency to be nonnegative:
\begin{equation}
\label{eq:flucintegral}
\expect{2^{-\Delta_i K}} \lmul 1.
\end{equation}

Substituting \labelcref{eq:entropyflow,eq:entropyproduction} into \labelcref{eq:indivpi,eq:indivS} yields a pair of \emph{detailed fluctuation} inequalities. They bound the entropy flow and production for every individual state transition:
\begin{align}
\Delta_e K(\mathbf x\rightarrow \mathbf y)
&\lplus \log\frac{1}{P(\mathbf y,\,\mathbf x)} - K(\mathbf x\mid \mathbf y,\widetilde P),
\label{eq:flucKe}
\\-\Delta_i K(\mathbf x\rightarrow \mathbf y)
&\lplus \log\frac{1}{P(\mathbf y,\,\mathbf x)} - K(\mathbf y\mid \mathbf x^*_{\widetilde P},\widetilde P).
\label{eq:flucKi}
\end{align}
While we obtain $\textbf x^*_{\widetilde P}:=(\textbf x,\,K(\textbf x\mid\widetilde P))$ in \Cref{sec:apxrandom} for technical reasons, in practice we can think of $\textbf x^*_{\widetilde P}$ as simply $\textbf x$, since their information content is usually about equivalent \citep[\S3.3.2]{gacs2021lecture}.

To interpret these inequalities, note that the \textbf{logical irreversibility} $K(\mathbf x\mid\mathbf y,\widetilde P)$ is the amount of information lost about a previous state $\mathbf x$, upon transitioning to $\mathbf y$. To compensate for the lost information, \labelcref{eq:flucKe} says that either $\textbf y$ must be a low-probability outcome, or else entropy must flow into the environment. Meanwhile, \labelcref{eq:flucKi} says that entropy production can only be negative for state transitions that are less likely than their ``algorithmic probability'' $2^{-K(\mathbf y\mid \mathbf x^*_{\widetilde P},\widetilde P)}$.

The inequalities \labelcref{eq:flucintegral,eq:flucKe,eq:flucKi} are information-theoretic in nature, holding for very general environments with arbitrary $\pi$. Next, we consider constant temperature environments, for which these inequalities resemble well-known thermodynamic fluctuation theorems.

\subsection{Constant temperature reservoirs}
\label{sec:fluctuation}

Suppose each reservoir's temperature $T_i$ is constant. Holding $\mathbf V_i$ fixed while integrating \labelcref{eq:partials} with respect to $B_i$ yields
\begin{equation}
\label{eq:workreservoir}
E_i = T_iB_i + E_{i,\mathrm{work}},
\end{equation}
with a ``constant of integration'' $E_{i,\mathrm{work}}(\mathbf V_i)$ that depends only on the macrovariables $\mathbf V_i$. The heat and work become exact differentials, since substituting \labelcref{eq:workreservoir} into \labelcref{eq:work} yields $\dinexact W_i = -dE_{i,\mathrm{work}}$, and then integrating yields
\begin{align}
Q_i
&= T_i\Delta B_i
= k_BT_i\ln 2\cdot \Delta\log\pi_i
= -k_BT_i\ln 2\cdot \Delta_e K,
\label{eq:heat2}
\\W_i
&= -\Delta E_{i,\mathrm{work}}.
\label{eq:work2}
\end{align}
As a result, we no longer need continuous variables to differentiate: provided that $E_i$ changes linearly with $B_i$, we can \emph{define} the heat and work using \Cref{eq:workreservoir,eq:heat2,eq:work2}.

Now, we solve \labelcref{eq:workreservoir} for
\begin{equation*}
\ln\pi_i(E_i,\,\mathbf V_i)
= \frac{B_i(E_i,\,\mathbf V_i)}{k_B}
= \frac{E_i - E_{i,\mathrm{work}}(\mathbf V_i)}{k_BT_i}.
\end{equation*}
Substituting into \labelcref{eq:jointmeasuregeneral}, the joint Liouville measure is
\begin{equation}
\label{eq:jointmeasuremany}
\pi(\mathbf x)
= \prod_{i=1}^m \pi_i(E_i,\,\mathbf V_i)
= \exp\left(\sum_{i=1}^m\frac{E_i-E_{i,\mathrm{work}}(\mathbf V_i)}{k_BT_i}\right).
\end{equation}

Up to a normalization factor, \labelcref{eq:jointmeasuremany} corresponds to a number of well-known formulas for the Gibbs measure. For example, consider the case where each reservoir $i$ has not only a constant temperature $T_i$, but also a constant pressure $p_i$ and chemical potential $\mu_i$ for a common species of particle. Then, $\mathbf V_i$ consists of the reservoir's volume $V_i$ and particle number $N_i$, and the Gibbs measure is given by \labelcref{eq:jointmeasuremany} with $E_{i,\mathrm{work}}(V_i,\,N_i):=-p_iV_i+\mu_iN_i$ \citep{guggenheim1939grand}. The mechanical term $-p_iV_i(x)$ and the chemical term $\mu_iN_i(x)$ correspond to the reservoir's ability to do each type of work.

In the case $m=1$, where there is only one reservoir, the joint state $\mathbf x$ is effectively determined as a function of only the base system state $x$. Formally, let $E_0(x)$, $V_0(x)$, and $N_0(x)$ respectively denote the energy, volume, and particle number of the base system at state $x$, and suppose that the totals  $E_0(x)+E_1$, $V_0(x)+V_1$, and $N_0(x)+N_1$ are conserved. After normalizing the reservoir macrovariables $E_1,V_1,N_1$ so that each of the totals is zero,
\begin{equation*}
\mathbf x = (x,\,E_1,\,V_1,\,N_1) = (x,\,-E_0(x),\,-V_0(x),\,-N_0(x)),
\end{equation*}

Consequently, \labelcref{eq:jointmeasuremany} simplifies to
\begin{equation}
\label{eq:jointmeasureone}
\pi(\mathbf x)
= \pi(x)
= \exp\left(\frac{-E_0(x)-E_\mathrm{work}(x)}{k_BT}\right),
\end{equation}
with $E_\mathrm{work}(x) := pV_0(x)-\mu N_0(x)$ in the case of constant pressure $p$ and chemical potential $\mu$.

More generally, we consider any kind of single-reservoir environment at constant temperature $T$, whose macrovariables are determined as functions of $x$. \labelcref{eq:jointmeasureone} still holds, with a possibly different potential function $E_\mathrm{work}(x)$. As a result, the joint algorithmic entropy is given by
\begin{equation}
\label{eq:jointentropyone}
S_\pi(\mathbf x\mid\widetilde P)
\eqplus S_\pi(x\mid\widetilde P)
= K(x\mid\widetilde P) - \frac{E_0(x)+E_\mathrm{work}(x)}{k_BT\ln 2}.
\end{equation}
It is customary to multiply aggregate entropies by $-k_BT\ln 2$, in order to express them in units of energy. The result is the \textbf{total algorithmic free energy}
\begin{equation}
\label{eq:gibbs}
G(x) := E(x) + E_\mathrm{work}(x) - K(x\mid\widetilde P)\cdot k_BT\ln 2.
\end{equation}

$G$ serves as a convenient accounting mechanism. Despite being a function of only the base system's state $x$, it tracks the total entropy production:
\begin{equation}
\label{eq:deltaG}
\Delta G \eqplus -k_BT\ln2\cdot\Delta_i K.
\end{equation}
As such, $G$ is non-increasing up to fluctuations. To be precise, \labelcref{eq:flucintegral,eq:deltaG} imply
\begin{equation}
\label{eq:flucG}
\expect{\exp\left(\frac{\Delta G}{k_BT}\right)}
= \expect{2^{\Delta G/k_BT\ln 2}}
\lmul 1.
\end{equation}

It is also useful to consider thermodynamic potentials which track only \emph{some} changes in entropy; we interpret them as \emph{resources} that convert to and from the excluded form(s) of entropy. For example, define the \textbf{Helmholtz algorithmic free energy} by
\begin{equation}
\label{eq:helmholtz}
F(x) := E(x) - K(x\mid\widetilde P)\cdot k_BT\ln 2.
\end{equation}

During any state transition, \labelcref{eq:work2,eq:gibbs,eq:helmholtz} imply
\begin{equation}
\label{eq:FplusW}
\Delta G
= \Delta F + \Delta E_\mathrm{work}
= \Delta F - W.
\end{equation}

Substituting \labelcref{eq:FplusW} into the integral fluctuation inequality \labelcref{eq:flucG} yields
\begin{equation}
\label{eq:jarzynski}
\expect{\exp\left(\frac{\Delta F- W}{k_BT}\right)}
\lmul 1.
\end{equation}

Markov's inequality (or \labelcref{eq:lawS}) then implies,  with probability greater than $1-\delta$,
\[\Delta F - W \lplus k_BT\ln\frac 1\delta.\]
Since the right-hand side is negligible at macroscopic scales, we see that $F$ measures the base system's capacity for work: free energy must be spent in order for the base system to do work; and conversely, work must be done on the base system in order to replenish free energy.

To get a closer view of the fluctuations in \labelcref{eq:jarzynski}, substitute \labelcref{eq:deltaG,eq:FplusW} into the detailed fluctuation inequality \labelcref{eq:flucKi}:
\[\frac{\Delta F - W}{k_BT\ln 2}
\lplus \log\frac{1}{P(y,\,x)} - K(y\mid x^*_{\widetilde P},\widetilde P).\]

Aside from the $\lmul$ sign, \labelcref{eq:jarzynski} is symbolically identical to \citet{jarzynski2011equalities}'s equality. However, there are important differences between the two results: our algorithmic free energy $F$ is a trajectory-level quantity, defined as a function of individual states rather than ensembles. Moreover, we take an inclusive view of the work $W$, as an interaction with the reservoir rather than as external driving.

Finally, to find an explicit connection between information and heat transfer, we change our choice of thermodynamic potential, from $F$, to the internal entropy $K$. Using \labelcref{eq:eplusi,eq:heat2},
\[\Delta_i K
= \Delta K - \Delta_e K
= \Delta K + \frac{Q}{k_BT\ln 2}.\]
Substituting into the integral fluctuation inequality \labelcref{eq:flucintegral} yields
\begin{equation}
\label{eq:landauer}
\expect{2^{-\left(\Delta K + \frac{Q}{k_BT\ln 2}\right)}}
\lmul 1.
\end{equation}
By Markov's inequality (or \labelcref{eq:lawS}) again, with probability greater than $1-\delta$,
\begin{equation}
\label{eq:landauer2}
\Delta K + \frac{Q}{k_BT\ln 2} \gplus -\log\frac 1\delta.
\end{equation}

In order for a digital memory to clear its data, \citet{landauer1961irreversibility} argued that it must emit at least $k_BT\ln 2$ of heat per erased bit. Since clearing data reduces its description complexity $K$, we can think of \labelcref{eq:landauer,eq:landauer2} as mathematically rigorous formulations of Landauer's principle. The impact of heat flow on the energy efficiency of computer hardware depends on the extent to which the flow is reversible; we examine this in \Cref{sec:landauer}.

The fluctuations in \labelcref{eq:landauer} are described by the inequality \labelcref{eq:flucKe}, which generalizes the earlier bounds of \citet{zurek1989thermodynamic} and \citet{kolchinsky2023}. In the case of near-deterministic transitions with negligible complexity (i.e., $\log P(y,\,x)\eqplus K(\widetilde P)\eqplus 0$), \labelcref{eq:flucKe} reduces to Zurek's inequality
\begin{equation}
\label{eq:zurek}
-\Delta_e K(x\rightarrow y) \gplus K(x\mid y).
\end{equation}

On the other hand, substituting \labelcref{eq:heat2} into \labelcref{eq:flucKe} yields Kolchinsky's inequality
\begin{equation}
\label{eq:kolchinsky}
\frac{Q(x\rightarrow y)}{k_BT\ln 2}
\gplus K(x\mid y,\widetilde P) - \log\frac{1}{P(y,\,x)}.
\end{equation}
It can be seen as a detailed version of Landauer's principle, giving the minimum heat transfer that accompanies a state transition $x\rightarrow y$, in terms of its probability $P(y,\,x)$ and logical irreversibility $K(x\mid y,\widetilde P)$.

\subsection{Refining the second law}
\label{sec:refine}

Now, we return to the fully general setting from \Cref{sec:algoentropy}, to deal with the algorithmic entropy's dependence on the conditional parameters $\widetilde P$. In most applications, the information content of $\widetilde P$ consists of a constant part and a variable part. For example, suppose we want to study a particular time-homogeneous $\pi$-stochastic Markov chain, at many different times. Then, \labelcref{eq:iterate} determines the transition matrix $P_{\Delta t}$ in terms of a constant part $P_1$ and a variable part $\Delta t$; to compute its dual $\widetilde P_{\Delta t}$, we add $\pi$ to the constant part.

We regard $(P_1,\,\pi)$ as built into the fundamental laws of physics, and assume that they have short encodings on the ``natural computers'' of the Universe. This assumption can be viewed as a complexity-theoretic version of the physical Church-Turing thesis \citep{kolchinsky2020thermodynamic,wolpert2024implications}, essentially saying that the Universe can implement its own laws on a small computer. Formally, we model this by choosing our reference universal computer in such a way that $K(P_1,\,\pi)\approx 0$. Only the variable part (in this case, $\Delta t$) requires an explicit correction.

First, we prove the correction for general $\widetilde P$; later, we consider the case where only $\Delta t$ is variable. To keep this subsection brief, we apply it only to the tail bound \labelcref{eq:lawS}, though the same correction can also be applied to the integral bound \labelcref{eq:flucS} and the detailed bound \labelcref{eq:flucKi}. A reader who is less interested in mathematical details may skip to the main result, \Cref{cor:secondlaw}.

\begin{theorem}
\label{thm:lawunconditional}
Let $\pi:\mathcal X\rightarrow\reals^+\setminus\{0\}$ be a measure and $X,Y$ be $\mathcal X$-valued random variables, such that the matrix $P(y,\,x) := \Pr(Y=y \mid X=x)$ is $\pi$-stochastic with a computable dual $\widetilde P$. Let $\delta > 0$. Then, with probability greater than $1-\delta$,
\begin{align}
S_\pi(X) - S_\pi(Y)
&\lplus I(X\mi\widetilde P) - I(Y\mi\widetilde P) + \log\frac 1\delta \nonumber
\\&\lplus K(\widetilde P) + \log\frac 1\delta. \label{eq:lawunconditional}
\end{align}
\end{theorem}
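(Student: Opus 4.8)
The plan is to start from the conditional tail bound \labelcref{eq:lawS}, which already states that with probability greater than $1-\delta$,
\[
S_\pi(X\mid\widetilde P) - S_\pi(Y\mid\widetilde P) \lplus \log\frac 1\delta,
\]
and to convert the conditional entropies $S_\pi(\cdot\mid\widetilde P)$ into unconditional entropies $S_\pi(\cdot)$, paying for the change of reference point with mutual-information terms. Since $S_\pi(x\mid z) = K(x\mid z) + \log\pi(x)$, the $\log\pi$ contributions cancel in the difference, so it suffices to relate $K(X\mid\widetilde P) - K(Y\mid\widetilde P)$ to $K(X) - K(Y)$. The natural tool is the identity $K(x) \eqplus K(x\mid\widetilde P) + I(x\mi\widetilde P)$ — or more carefully, the version with the extra $K$-argument, $K(x) \eqplus K(x\mid\widetilde P,K(\widetilde P)) + I(x\mi\widetilde P)$ from \labelcref{eq:mutinf} — applied with $x = X$ and $x = Y$ and subtracted. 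The slight mismatch between conditioning on $\widetilde P$ versus on $(\widetilde P, K(\widetilde P))$ costs only an additive $O(\log K(\widetilde P))$, which I will absorb; alternatively one runs the whole argument of \Cref{sec:apxrandom} relative to the computer $U_{K(\widetilde P)}$ so that the starred version is already built in.

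Concretely, the steps are: (i) invoke \labelcref{eq:lawS} to get $K(X\mid\widetilde P) - K(Y\mid\widetilde P) \lplus \log(1/\delta) + \log\pi(Y) - \log\pi(X)$ on a set of probability $> 1-\delta$; (ii) substitute $K(X\mid\widetilde P) \eqplus K(X) - I(X\mi\widetilde P)$ and likewise for $Y$, yielding
\[
S_\pi(X) - S_\pi(Y) \;=\; \bigl(K(X\mid\widetilde P) - K(Y\mid\widetilde P)\bigr) + I(X\mi\widetilde P) - I(Y\mi\widetilde P) + \log\pi(X) - \log\pi(Y),
\]
and combine with (i) to obtain the first line of \labelcref{eq:lawunconditional}; (iii) bound $I(Y\mi\widetilde P) \gplus 0$ (mutual information is nonnegative up to a constant) and $I(X\mi\widetilde P) \lplus K(\widetilde P)$ (mutual information with $z$ is at most $K(z)$ up to a constant, directly from \labelcref{eq:mutinf} and subadditivity), giving the second line. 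Because the event on which \labelcref{eq:lawS} holds is unchanged, no union bound or loss in $\delta$ is incurred.

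The main obstacle is bookkeeping the conditioning on $K(\widetilde P)$: the clean identity $K(x) \eqplus K(x\mid z) + I(x\mi z)$ is really $K(x) \eqplus K(x\mid z,K(z)) + I(x\mi z)$, so strictly speaking \labelcref{eq:lawS} would need to be invoked with $\widetilde P$ replaced by $(\widetilde P, K(\widetilde P))$, or one notes $K(x\mid \widetilde P, K(\widetilde P)) \eqplus K(x\mid\widetilde P) + O(\log K(\widetilde P))$ is \emph{not} quite an equality in general — the honest route is to either (a) state the theorem with the starred parameter $\widetilde P^* := (\widetilde P, K(\widetilde P))$ as is done elsewhere in the paper and remark that the difference is inessential, or (b) re-derive \labelcref{eq:lawS} conditioned on $K(\widetilde P)$ from the outset, which is legitimate since, as the excerpt notes, all the $K$-relations remain valid under conditioning on arbitrary side data $z$ by passing to $U_z$. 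I would take route (b) in the write-up, so that the $\eqplus$ step in (ii) is exact and only the final bound $I(X\mi\widetilde P)\lplus K(\widetilde P)$ introduces slack. Everything else is a routine additive-constant calculation.
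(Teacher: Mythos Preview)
Your proposal is correct and matches the paper's proof essentially line for line: the paper also takes your route (b), conditioning \Cref{thm:lawS} on the extra data $K(\widetilde P)$ to obtain $S_\pi(X\mid\widetilde P,K(\widetilde P)) - S_\pi(Y\mid\widetilde P,K(\widetilde P)) \lplus \log(1/\delta)$, then applies the identity $S_\pi(x\mid\widetilde P,K(\widetilde P)) \eqplus S_\pi(x) - I(x\mi\widetilde P)$ and the bounds $0\lplus I(x\mi z)\lplus K(z)$ exactly as you outline.
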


\begin{proof}
For all $x\in\mathcal X$, the defining equations \labelcref{eq:entropy,eq:mutinf} imply

\begin{align}
S_\pi(x\mid\widetilde P,K(\widetilde P)) \nonumber
&= \log\pi(x) + K(x\mid\widetilde P,K(\widetilde P)) \nonumber
\\&\eqplus \log\pi(x) + K(x) - I(x\mi\widetilde P) \nonumber
\\&= S_\pi(x) - I(x\mi\widetilde P). \label{eq:correction}
\end{align}

Now, fix $\delta>0$. We condition \Cref{thm:lawS} on the additional data $K(\widetilde P)$ to find that, with probability greater than $1-\delta$,
\[S_\pi(X\mid\widetilde P,K(\widetilde P))-S_\pi(Y\mid\widetilde P,K(\widetilde P)) \lplus \log\frac 1\delta.\]

In this event, \labelcref{eq:correction} yields
\begin{align*}
S_\pi(X) - S_\pi(Y)
&\eqplus S_\pi(X_s\mid\widetilde P,K(\widetilde P)) + I(X\mi\widetilde P) - S_\pi(Y\mid\widetilde P,K(\widetilde P)) - I(Y\mi\widetilde P)
\\&\lplus I(X\mi\widetilde P) - I(Y\mi\widetilde P) + \log\frac 1\delta.
\end{align*}

\labelcref{eq:lawunconditional} now follows from the general bounds $0 \lplus I(x\mi z) \lplus K(z)$.
\end{proof}

Next, we show that \Cref{thm:lawunconditional} is tight: in the deterministic and reversible case, its conclusion holds with equality. To be precise, consider the case where $P$ is a permutation matrix; equivalently, the dynamics are described by a bijective transformation $f:\mathcal X\rightarrow\mathcal X$.

\begin{corollary}
\label{cor:janzing}
For all computable bijections $f:\mathcal X\rightarrow\mathcal X$, and $x\in\mathcal X$,
\begin{align*}
K(f(x)) - K(x)  &\eqplus I(f(x)\mi f) - I(x\mi f).
\end{align*}

In particular, if $I(x\mi f) \eqplus 0$, then
\begin{equation}
\label{eq:janzing}
0 \lplus K(f(x)) - K(x) \lplus K(f).
\end{equation}
\end{corollary}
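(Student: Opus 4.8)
My plan is to prove the identity directly from the elementary properties of $K$ and the definition \labelcref{eq:mutinf} of mutual information, and then to note how it exhibits the tightness of \Cref{thm:lawunconditional}. The starting observation is that, relative to $f$, the states $x$ and $f(x)$ carry the same complexity. Given a program for $f$ and a shortest program for $x$ (conditioned on $f$ and any auxiliary string $z$), we may run $f$ to print $f(x)$, so $K(f(x)\mid f,\,z)\lplus K(x\mid f,\,z)$. Conversely, since $f$ is a bijection, $f^{-1}$ is computable from $f$ by unbounded search, hence $K(x\mid f,\,z) = K(f^{-1}(f(x))\mid f,\,z)\lplus K(f(x)\mid f,\,z)$. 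Therefore $K(x\mid f,\,z)\eqplus K(f(x)\mid f,\,z)$ for every $z$; I would use this with $z := K(f)$.

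The second step converts conditional complexities to unconditional ones. By \labelcref{eq:mutinf}, $K(a\mid f,\,K(f))\eqplus K(a) - I(a\mi f)$ for any $a$. Applying this with $a := x$ and with $a := f(x)$ and subtracting gives
\[
K(f(x)) - K(x) \eqplus I(f(x)\mi f) - I(x\mi f),
\]
which is the claimed identity. For the ``in particular'' clause, substitute $I(x\mi f)\eqplus 0$, leaving $K(f(x)) - K(x)\eqplus I(f(x)\mi f)$, and then invoke the generic bounds $0\lplus I(f(x)\mi f)\lplus K(f)$ to get \labelcref{eq:janzing}.

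Alternatively --- and this is presumably the intended framing --- one can read the identity off \Cref{thm:lawunconditional} itself. Take $\pi := \sharp$ (so $S_\sharp = K$) and let $P$ be the permutation matrix of $f$; then \labelcref{eq:dualP} makes $\widetilde P$ the permutation matrix of $f^{-1}$, with $K(\widetilde P)\eqplus K(f)$ and $I(a\mi\widetilde P)\eqplus I(a\mi f)$. Feeding a point-mass distribution $X$ on an arbitrary $x$ into \Cref{thm:lawunconditional} with $\delta := 1/2$ collapses the probabilistic statement to a pointwise one (the high-probability event must contain the sole atom) and absorbs $\log(1/\delta)$ into the constant, yielding $K(x) - K(f(x))\lplus I(x\mi f) - I(f(x)\mi f)$; running the same argument with $f^{-1}$ supplies the reverse inequality, and the two combine to the equality. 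On this route the only delicate point is checking that the additive constant in \Cref{thm:lawunconditional} is uniform in the choice of $X$, $\pi$, and $P$, which one confirms by tracing the Markov-inequality step back to the universal constant in \labelcref{eq:flucS}; on the direct route there is essentially no obstacle beyond observing that a computable bijection of a countable set has a computable inverse.
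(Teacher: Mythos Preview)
Your proposal is correct, and your second (``alternative'') route is exactly the paper's proof: take $\pi:=\sharp$, let $P$ be the permutation matrix of $f$, feed a point mass at $x$ into \Cref{thm:lawunconditional} with $\delta:=1/2$, repeat for $f^{-1}$, and combine the two one-sided inequalities. Your concern about uniformity of the additive constant is legitimate and resolves the way you say: the constants in \Cref{thm:lawunconditional} trace back through \Cref{thm:lawS} and \Cref{lem:detail2integral} to Kraft's inequality, and depend only on the reference machine $U$.

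Your first route, however, is genuinely different and more elementary than the paper's. Rather than going through the randomness-conservation machinery of \Cref{sec:apxrandom}, you argue directly from the symmetry $K(x\mid f,\,K(f))\eqplus K(f(x)\mid f,\,K(f))$ --- which needs only that $f^{-1}$ is computable from $f$ by search --- and then unpack both sides via \labelcref{eq:mutinf}. This gives a self-contained proof that makes no appeal to stochastic matrices, duals, or the fluctuation lemmas. The paper's route, by contrast, is chosen to exhibit \Cref{cor:janzing} as a tightness witness for \Cref{thm:lawunconditional}: it shows that the correction term $I(X\mi\widetilde P)-I(Y\mi\widetilde P)$ in the theorem is not slack, being attained with equality in the deterministic bijective case. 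Your direct argument obscures that structural point but is cleaner as a standalone proof of the identity.
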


\begin{proof}
Define a permutation matrix $P$ as follows: for $x,y\in\mathcal X$, let $P(y,\,x):=1$ if $y=f(x)$, and $P(y,\,x):=0$ otherwise. Since $P$ is doubly stochastic, $\widetilde P$ is its transpose, computable by a constant-length program together with $f$. In order to apply \Cref{thm:lawunconditional}, let $\pi:=\sharp$ so that $S_\pi=K$, let the ``random'' variable $X$ be a constant $x\in\mathcal X$ with probability one, and let $Y := f(x)$. Clearly, if the probability of a non-random event is positive, then it occurs with certainty. Therefore, by setting $\delta:=1/2$ in \Cref{thm:lawunconditional},
\[K(x) - K(f(x)) \lplus I(x\mi f) - I(f(x)\mi f) \lplus K(f).\]

Repeating the same argument for the inverse function yields
\[K(f(x)) - K(x) \lplus I(f(x)\mi f) - I(x\mi f) \lplus K(f).\]

Combining these inequalities yields the desired conclusions.
\end{proof}

The leftmost inequality of \labelcref{eq:janzing} first appeared in \citet{janzing2016algorithmic}. There, the implication
\[I(x\mi f)\eqplus 0 \implies K(x)\lplus K(f(x))\]
was interpreted as saying that the second law of thermodynamics (increase in $K$) is due to algorithmic independence of the initial condition $x$ from the dynamical law $f$. The problem with their interpretation lies in the rightmost inequality of \labelcref{eq:janzing}: $f$ would have to be extraordinarily complex to allow entropy production at a physically meaningful scale. For any deterministic dynamical law that we can feasibly write down, \Cref{cor:janzing} really says that the entropy cannot change by a physically meaningful amount (recall the unit conversions \labelcref{eq:conversion}). Thus, randomness (which in classical physics comes from coarse-graining) is necessary for substantial entropy production to occur.

That being said, we can offer a useful interpretation in line with that of \citet{janzing2016algorithmic}. A doubly stochastic law $P$ can be implemented by choosing a bijection $f$ at random \citep{revesz1962probabilistic}, e.g., by repeated tosses of a fair coin. The combination of a low-complexity $P$, along with the results of sufficiently many coin tosses, then serves as a deterministic law $f$ of very high complexity. Since the coins contribute the bulk of $f$'s information content, algorithmic independence from $f$ really means independence from the coin tosses. If the coins are indeed independent of $x$, \Cref{cor:janzing} implies that the entropy cannot decrease, and may increase by up to as many bits as there are coins.

In the setting of a time-homogeneous process, we want to compare the entropies $S_\pi(X_s)$ and $S_\pi(X_t)$, at times $s<t$. The transition matrix $P_{t-s}$ governing the state transition is generated by either $P_1$ (if time is discrete), or a rate matrix (if time is continuous). A programmatic description of our process's physics should compute the generator; and from it, the transition matrix $P_{t-s}$ and its dual $\widetilde P_{t-s}$, over any desired time interval $[s,\,t]$.

Formally, we postulate the existence of a short computer program $p$, such that for all $\Delta t$ and $x,y\in\mathcal X$, our reference computer $U$ outputs $U(p,\,\Delta t,\,y,\,x) = \widetilde P_{\Delta t}(y,\,x)$. In other words, the pair $(p,\,\Delta t)$ is a program for $\widetilde P_{\Delta t}$. In the continuous-time case, since the uncountable set $\reals^+$ has no encoding, we only consider rational durations $\Delta t\in\mathbb Q^+$. By correcting for the description complexity of $\Delta t$, we arrive at our most comprehensive, duration-dependent second law of thermodynamics.

\begin{corollary}[Algorithmic second law of thermodynamics]
\label{cor:secondlaw}
Let $\pi:\mathcal X\rightarrow\reals^+\setminus\{0\}$ be a measure, $(X_t)_{t\in\mathcal T}$ be a stochastic process in either continuous ($\mathcal T=\reals^+$) or discrete ($\mathcal T=\ints^+$) time, and fix $p\in\bits^*$ so that $|p|\eqplus 0$. Consider a pair $s,t\in\mathcal T$ with $t-s\in\mathbb Q^+$, such that the matrix $P(y,\,x):=\Pr(X_t = y \mid X_s = x)$ is $\pi$-stochastic, and its dual satisfies $\widetilde P(y,\,x)=U(p,\,t-s,\,y,\,x)$.
Then, for $\delta>0$, with probability greater than $1-\delta$,
\begin{align}
S_\pi(X_s) - S_\pi(X_t)
&\lplus I(X_s\mi t-s) - I(X_t\mi t-s) + \log\frac 1\delta \nonumber
\\&\lplus K(t-s) + \log\frac 1\delta.\label{eq:secondlaw}
\end{align}
\end{corollary}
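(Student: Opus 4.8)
The plan is to deduce \labelcref{eq:secondlaw} from \Cref{thm:lawunconditional} by relativizing the entire argument to the variable part of the dynamics, $z:=(t-s,\,K(t-s))$. I would set $X:=X_s$ and $Y:=X_t$; then $P$ is $\pi$-stochastic by hypothesis, and its dual $\widetilde P(y,\,x)=U(p,\,t-s,\,y,\,x)$ is computable, so the hypotheses of \Cref{thm:lawunconditional} hold. Invoking its $z$-relativized form — legitimate because, as noted at the end of \Cref{sec:prelimait}, every $K$-relation survives conditioning on arbitrary side data with only the usual constant changing — gives, with probability greater than $1-\delta$,
\[
S_\pi(X_s\mid z) - S_\pi(X_t\mid z)
\lplus I(X_s\mi\widetilde P\mid z) - I(X_t\mi\widetilde P\mid z) + \log\frac 1\delta.
\]

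Next I would discharge the mutual-information terms. Because $|p|\eqplus 0$, the map $(y,\,x)\mapsto U(p,\,t-s,\,y,\,x)$ is a constant-length program for $\widetilde P$ once $t-s$ is given, so $K(\widetilde P\mid z)\eqplus 0$; combined with the generic bounds $0\lplus I(a\mi b)\lplus K(b)$ this forces $I(X_s\mi\widetilde P\mid z)\eqplus I(X_t\mi\widetilde P\mid z)\eqplus 0$, whence $S_\pi(X_s\mid z)-S_\pi(X_t\mid z)\lplus\log\frac 1\delta$ with probability greater than $1-\delta$. To return to the unconditioned entropies, I reuse the identity \labelcref{eq:correction} from the proof of \Cref{thm:lawunconditional}, but with $t-s$ in place of $\widetilde P$: the definitions \labelcref{eq:entropy,eq:mutinf} give $S_\pi(x\mid t-s,\,K(t-s))\eqplus S_\pi(x)-I(x\mi t-s)$ for every $x\in\mathcal X$. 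Substituting both instances yields $S_\pi(X_s)-S_\pi(X_t)\lplus I(X_s\mi t-s)-I(X_t\mi t-s)+\log\frac 1\delta$, the first line of \labelcref{eq:secondlaw}; the second line then follows from $I(X_s\mi t-s)\lplus K(t-s)$ and $I(X_t\mi t-s)\gplus 0$.

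The step that needs genuine care — and the one I would flag as the main subtlety, even though the proof is short overall — is resisting the shortcut of simply replacing $K(\widetilde P)$ by $K(t-s)$ and $I(\cdot\mi\widetilde P)$ by $I(\cdot\mi t-s)$ directly in the conclusion of \Cref{thm:lawunconditional}. That substitution is invalid: while $\widetilde P$ is computable from $t-s$ (so $I(X_s\mi\widetilde P)\lplus I(X_s\mi t-s)$), the reverse can fail — for trivial dynamics such as $P_1$ being the identity, the matrix $\widetilde P$ does not depend on $t-s$ at all — so $I(X_t\mi t-s)$ need not be controlled by $I(X_t\mi\widetilde P)$. Carrying out the relativization to $z=(t-s,\,K(t-s))$ from the start, rather than unconditioning $\widetilde P$ and then re-conditioning on $t-s$, is exactly what avoids this asymmetry; everything else is bookkeeping with the standard properties of $K$ and $I$.
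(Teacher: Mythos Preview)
Your argument is correct, but it takes a more circuitous route than the paper's. The paper simply applies \Cref{thm:lawunconditional} directly, \emph{choosing} the pair $(p,\,t-s)$ as the program that encodes $\widetilde P$; the conclusion then reads $S_\pi(X_s)-S_\pi(X_t)\lplus I(X_s\mi(p,\,t-s))-I(X_t\mi(p,\,t-s))+\log\frac 1\delta$, and $|p|\eqplus 0$ finishes it off. No relativization or un-relativization is needed.

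The ``shortcut'' you warn against is exactly what the paper does, and it is valid here because of the convention spelled out in \Cref{sec:prelimait}: a computable function is encoded by \emph{any} program that computes it, and ``repeated mentions of a function always refer to the same program''. So in \Cref{thm:lawunconditional}, $I(\cdot\mi\widetilde P)$ and $K(\widetilde P)$ are by definition $I(\cdot\mi q)$ and $K(q)$ for whatever program $q$ we fix; taking $q=(p,\,t-s)$ makes $I(\cdot\mi\widetilde P)\eqplus I(\cdot\mi t-s)$ an identity, not a substitution requiring two-sided computability. Your worry about identity dynamics (where the abstract matrix $\widetilde P$ carries no information about $t-s$) is real only if $\widetilde P$ denotes a canonical or minimal description, which is not the paper's convention. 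Your detour through relativization to $z=(t-s,\,K(t-s))$ and the correction identity is sound and does buy robustness against that alternative reading, but under the paper's stated convention it is unnecessary extra work.
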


\begin{proof}
Using the pair $(p,\,t-s)$ to encode $\widetilde P$, the conclusion of \Cref{thm:lawunconditional} becomes
\begin{align*}
S_\pi(X_s) - S_\pi(X_t)
&\lplus I(X_s\mi(p,\,t-s)) - I(X_t\mi(p,\,t-s)) + \log\frac 1\delta
\\&\lplus K(p,\,t-s) + \log\frac 1\delta.
\end{align*}
Since $|p|\eqplus 0$, \labelcref{eq:secondlaw} follows.
\end{proof}

We remark that both of the fluctuation terms, $K(t-s)$ and $\log(1/\delta)$, are necessary. The former allows periodic visits to low-entropy states, such as in a deterministic process with a very long cycle; whereas the latter allows chance encounters with low-entropy states, such as in a random mixing process. The Poincar\'e recurrence theorem famously predicts that Hamiltonian systems eventually return to states of low entropy \citep{saussol2009introduction}; \Cref{cor:secondlaw} is consistent with this finding \footnote{That said, entropy fluctuations at the scale of Poincar\'e recurrence are incredibly rare. To illustrate, consider $10^{10^{20}}$ consecutive time steps of any simply describable duration, be they seconds or years. If we set $\delta := \left(10^{10^{20}}\right)^{-3}$ in \Cref{cor:secondlaw}, then by a union bound over all pairs $(s,t)$ of these times, the probability of \labelcref{eq:secondlaw} failing for even \emph{one} pair is less than one in $10^{10^{20}}$. Since $10^{10^{20}} < 2^{3.33\times 10^{20}}$, \labelcref{eq:secondlaw} bounds the largest entropy decrease by
\begin{align*}
K(t-s) + \log\frac{1}{\delta}
&< 3.33\times 10^{20} + 2\log(3.33\times 10^{20}) + 9.99\times 10^{20}
< 1.333\times 10^{21} \text{ bits}
< \SI{0.013}{\joule\per\kelvin}.
\end{align*}
Each increment of the topmost exponent in $10^{10^{20}}$ would only multiply this bound tenfold.
}.

For realistic systems that do not change too quickly, the condition $t-s\in\mathbb Q^+$ is not a serious limitation: a small change in $t$ suffices not only to make $t-s$ rational, but also to make its complexity $K(t-s)$ negligible. For concreteness, suppose we restrict our comparisons of entropy to durations that are multiples of Planck's time. In Planck units, since the age of the Universe is less than $2^{203}$, all durations up to the present can be represented by integers in the range $1 < \Delta t < 2^{203}$, for which
\[K(\Delta t)
\lplus \log\Delta t + 2\log\log\Delta t
< 203 + 2\cdot 8 \text{ bits}
= 219 \text{ bits}
< \SI{2.1e-21}{\joule\per\kelvin}.\]
Therefore, up to negligible fudge terms and exceedingly rare fluctuations, \Cref{cor:secondlaw} says that the unconditional entropy $S_\pi$ is non-decreasing over time.

\section{Applications}
\label{sec:discussion}

\subsection{Inclusive dynamics and open systems}
\label{sec:discussionintro}

Our setup thus far is quite general, but cumbersome for the purpose of constructing examples. We focused on inclusive closed system models, in which all influences are explicitly accounted for. As a result, the dynamics are fully determined by the laws of physics, which we assume to be measure-preserving (hence, $\pi$-stochastic on a Markovian coarse-graining), simply describable, and time-homogeneous. \Cref{cor:secondlaw} applies directly to such settings.

On the other hand, sometimes we want to omit the details of some influences, leaving an \textbf{open system} model. For example, in the setting of \Cref{sec:environment,sec:fluctuation}, if we model only the base system without the environment's reservoirs, this is an open system. Its mesostates have constant measure, and yet the dynamics have a non-constant stationary measure $\pi$. Since open systems can trade with their environment, measure-preservation and conservation laws do not directly apply.

Without modeling the environment explicitly, can we correct for it to say anything useful? To derive formulas for the entropy flow and production, we must start from the corresponding inclusive model and then eliminate the environment measure $\pi$. We did this in \labelcref{eq:entropyflowmechanism,eq:entropyproductionmechanism}; these formulas still depend on $\pi$ through the dual matrix, but that too is eliminated in the case of (local) detailed balance. We can then apply results such as \Cref{thm:lawunconditional} and \Cref{cor:secondlaw}, by substituting the correct formula \labelcref{eq:entropyproductionmechanism} for entropy production.

Other open systems may experience algorithmically complex or time-dependent dynamics. Again, we begin with the inclusive model, and then try to eliminate the influence. Consider a \emph{control system} with state space $\mathcal C$, that influences a \emph{base system} with state space $\mathcal X$. Suppose the control state stays still, i.e., the joint transition matrix $P:(\mathcal C\times\mathcal X)\times(\mathcal C\times\mathcal X)\rightarrow\reals^+$ satisfies
\[c\ne d \implies P((d,y),\,(c,x))=0.\]
For any coarse-grained Liouville measure
\[\pi(c,x):=\pi_{\mathcal C}(c)\pi_{\mathcal X}(x)\quad\text{for }c\in\mathcal C,\;x\in\mathcal X,\]
it is easy to verify that $P$ is $\pi$-stochastic iff each of the submatrices $P((c,\cdot),\,(c,\cdot))$ are $\pi_{\mathcal X}$-stochastic.

In this manner, a single \emph{global dynamics} $P$ can emulate a wide variety of \emph{local dynamics} $P((c,\cdot),\,(c,\cdot))$ on our base system. Since the submatrices are $\pi_{\mathcal X}$-stochastic, results such as \Cref{thm:lawunconditional} and \Cref{cor:secondlaw} apply to our base system. However, the submatrix is not fully determined by the laws of physics $P$; it also depends on the control input $c$. If $K(c)$ is sufficiently small, it is safe to ignore this dependence.

However, if $K(c)$ is large, the control might represent a complex piece of information, such as the mind of a Maxwell's demon. In the next subsection, we revisit the famous thought experiment using our algorithmic inequalities. If we insist on modeling $x$ as an open system, then the algorithmic entropy must be conditioned on $c$.

A useful variant of a control is a reversible program counter or \emph{clock} $\mathcal C := \ints_m$, with $\pi_{\mathcal C} := \sharp$. Instead of staying still, it ticks ahead in a predetermined manner:
\[c+1\not\equiv d \pmod{m} \implies P((d,y),\,(c,x))=0.\]
Although the global dynamics $P$ is time-homogeneous, a clock enables the base system to undergo a sequence of different $\pi_{\mathcal X}$-stochastic transitions, given by the submatrices $P((c+1,\cdot),\,(c,\cdot))$. If $m$ is not too large, then $K(c)\lplus 2\log m$ is quite small.

We can apply \Cref{thm:lawunconditional} to systems with time-dependent dynamics. Consider a time interval $[s,\,t]$, during which the evolution is given by a short sequence of simply describable time-dependent transition matrices. Then, setting $(X,Y):=(X_s,X_t)$ in \Cref{thm:lawunconditional}, we have $K(\widetilde P)\eqplus 0$. Allowing for a small but constant probability $\delta$ of failure, its conclusion simplifies to $S_{\pi_{\mathcal X}}(X_s) \lplus S_{\pi_{\mathcal X}}(X_t)$. Recall from \Cref{sec:algoentropy} that, for non-reservoir systems, we coarse-grain into equal-sized mesostates. Therefore, in the absence of heat transfer, the dynamics are doubly stochastic, the algorithmic entropy is $S_\sharp=K$, and the theorem's conclusion becomes
\begin{equation}
\label{eq:secondlawsimplified}
K(X_s)\lplus K(X_t).
\end{equation}

It may also seem cumbersome to construct interesting examples of transition matrices. Fortunately, by R\'ev\'esz's generalization of the Birkhoff-von Neumann theorem, transitioning by a doubly stochastic matrix is equivalent to transitioning by a probabilistic mixture of deterministic bijections \citep{revesz1962probabilistic}. This means, instead of writing an explicit transition matrix $P$, we can describe the dynamics as a \emph{random bijection} $x\mapsto F(x)$, where the bijection $F:\mathcal X\rightarrow\mathcal X$ is sampled independently at each time step, from some distribution with low description complexity \footnote{Strictly speaking, we should write $F:\Omega\times\mathcal X\rightarrow\mathcal X$ to express dependence on an ambient probability space $\Omega$. If $\mathcal X$ is countably infinite, then there are uncountably many bijections on it, so the probability measure need not be discrete. Instead, it can be represented by a uniformly computable sequence of functions $\Gamma_n:\mathcal X^{\mathcal X_n}\rightarrow\reals$, where $\mathcal X_n\subset\mathcal X$ consists of the states whose encodings have length at most $n$, and $\mathcal X^{\mathcal X_n}$ is the countable set of functions $\tilde f:\mathcal X_n\rightarrow\mathcal X$. We take $\Gamma_n(\tilde f)$ to be the probability of sampling a function $f:\mathcal X\rightarrow\mathcal X$ whose restriction to $\mathcal X_n$ is $\tilde f$. In order to compute the transition matrix entry $P(y,\,x)$ from $\Gamma$, let $n=|x|$, and enumerate functions $\tilde f:\mathcal X_n\rightarrow\mathcal X$ until their total probability is as close to $1$ as desired. Then, $P(y,\,x)$ is approximately the probability assigned to functions that satisfy $\tilde f(x) = y$. Since $K(P)\lplus K(\Gamma)$, if $\Gamma$ is simply describable, then so is $P$.}. If we only care to specify $F$ on a proper subset of $\mathcal X$, then it can be a random \emph{injection} whose domain and range have complements of equal cardinality, since these can be extended to bijections on all of $\mathcal X$.

For example, on the two-element state space $\mathcal X:= \bits = \{\mathtt 0,\mathtt 1\}$, there are exactly two bijections: identity and negation. Therefore, these are the only deterministic dynamics permitted on $\bits$. The full set of permitted dynamics (in the absence of heat transfer) are the mixtures of identity and negation, parametrized by a probability of negation $\alpha\in [0,1]$. Injections on proper subsets of $\mathcal X$ are also allowed: for example, we can specify that $\mathtt 0$ maps to $\mathtt 1$, without caring what $\mathtt 1$ maps to (though in this case, negation is the only bijective extension).

In summary, the behavior of open systems can be influenced in a variety of ways. Complex controls require careful accounting, but simple controls and clocks just extend our modeling capabilities: they allow us to consider systems that evolve by sequences of deterministic or random bijections on $\mathcal X$, or injections on subsets of $\mathcal X$. The transition function at each time step is sampled from a simply describable distribution. If not too many time steps are taken, then \labelcref{eq:secondlawsimplified} holds with a high probability. This time-dependent setting is flexible enough to illustrate a number of phenomena regarding the thermodynamics of information.

\subsection{Maxwell's demon}
\label{sec:maxwell}

As a warmup to more difficult examples, we now review the original challenge to the second law \citep{maruyama2009colloquium}. The core ideas here are not new, but we hope that our simple abstract presentation lends some pedagogical clarity.

In the famous thought experiment, \textbf{Maxwell's demon} has a memory that starts in a low-entropy ``clear'' state $0\in\mathcal C$. It interacts with a base system that starts in some high-entropy mesostate $x\in\mathcal X$. It is helpful to begin with a stylized special case, in which $\mathcal C=\mathcal X$ and the demon is able to reversibly perform a complete measurement, copying the system's state into memory:
\[(0,\, x) \mapsto (x,\, x).\]

Using its measurement as a control, the demon proceeds to reversibly erase the base system's entropy:
\[(x,\, x) \mapsto (x,\, 0).\]

Both of these mappings are injective on their respective domains, $\{(0,\,x):\,x\in\mathcal X\}$ and $\{(x,\,x):\,x\in\mathcal X\}$. Therefore, they can be extended to bijections on $\mathcal X\times\mathcal X$. Since they are deterministic, \Cref{cor:janzing} implies the total entropy cannot change substantially. Indeed,
\[K(0,\,x)\eqplus K(x,\,x)\eqplus K(x,\,0)\eqplus K(x).\]

The net effect is that the base system's information content is moved into the demon's memory. The erasure step uses a high-entropy control $x$; as discussed in \Cref{sec:discussionintro}, its entropy must be included in the total, for otherwise the base system's transition $x \mapsto 0$ would appear to violate the second law. The erasure is permitted precisely because it occurs in the presence of a copy. The second law forbids the demon from clearing the last copy remaining in its memory:
\[(x,\, 0) \mapsto (0,\, 0),\]
as can also be seen by noting that this map is either not injective (if defined to work for all $x$), or not simply describable (if tailored for a specific $x$).

Now we present the general case, which includes nearly all physical models of Maxwell's demon from the literature. Suppose the demon performs a partial measurement $m(x)$, where $m$ is a (possibly random, not necessarily bijective) function, whose distribution has low description complexity. After obtaining the measurement, the demon uses it as a control to transition the system from $x$ to some new (possibly random) state $y$:
\[(0,\, x) \mapsto (m(x),\, x) \mapsto (m(x),\, y).\]

The specifics of $y$'s computation are not important: as long as it amounts to a simply describable mixture of bijections, the second law expressed by \labelcref{eq:secondlawsimplified} holds with high probability. It expands to
\[K(m(x),\,x) \lplus K(m(x),\,y).\]
Subtracting $K(m(x))$ from both sides yields
\begin{equation}
\label{eq:demon}
K(x) - I(m(x)\mi x)
\eqplus K(x\mid m(x)^*) \lplus K(y\mid m(x)^*)
\eqplus K(y) - I(m(x)\mi  y).    
\end{equation}

We can interpret the conditional complexities as subjective entropies, from the point of view of a demon that knows the measurement $m(x)$. The measurement makes the base system's subjective entropy less than its objective (i.e., unconditional) entropy.

Now, \labelcref{eq:demon} implies $K(x) - K(y) \lplus I(m(x)\mi x)$. Consider the case where the mutual information of measurement is reversibly erased: ``reversibly'' meaning $K(m(x),\,x)\eqplus K(m(x),\,y)$, and ``erased'' meaning $I(m(x)\mi  y)\eqplus 0$. Then, in fact,
\[K(x) - K(y) \eqplus I(m(x)\mi  x).\]

Thus, although the second law forbids a decrease in the \emph{total} entropy, it permits the measured system to lose as much entropy as was measured from it! There is no contradiction here: since the algorithmic entropy is subadditive, it is possible to have simultaneously $K(x) \gg K(y)$ and $K(m(x),\,x)\lplus K(m(x),\,y)$.

Note that our analysis does not require the completion of a cycle, nor any ad hoc extension of the definition of entropy. The algorithmic entropy naturally accounts for both the demon's memory and the base system, satisfying the second law  of thermodynamics at every step of their evolution.

\subsection{Landauer's principle}
\label{sec:landauer}

In \Cref{sec:fluctuation}, we saw that a system can dump its unwanted entropy into a reservoir as heat. \citet{landauer1961irreversibility} first discovered this in the context of computer circuitry, arguing that logically irreversible computations necessarily convert some energy into waste heat. \citet{neyman1966negentropy} followed up with a larger bound in settings involving irreversible thermal equilibration. There are now many modern references treating Landauer's principle \citep{sagawa2014thermodynamic,frank2018physical,wolpert2019stochastic}.

We can develop similar ideas in terms of the algorithmic entropy. Just as \labelcref{eq:eplusi} decomposes a base system's entropy gain into reversible and irreversible parts, we can rearrange \labelcref{eq:eplusi} to decompose the \emph{environment}'s entropy gain into reversible and irreversible parts:
\begin{equation}
\label{eq:eplusireversed}
-\Delta_e K(\mathbf x\rightarrow\mathbf y)
= \Delta_i K(\mathbf x\rightarrow\mathbf y)
- \Delta K(\mathbf x\rightarrow\mathbf y).
\end{equation}

By \labelcref{eq:heat2}, the left-hand side is directly proportional to the heat flow. Its irreversible part is the algorithmic entropy production, or \textbf{EP cost} $\Delta_i K$. The reversible part is the drop in base system entropy $-\Delta K$; by analogy to its ensemble-based analogue in the stochastic thermodynamics literature \citep{wolpert2019stochastic,wolpert2024stochastic}, we call it the algorithmic \textbf{Landauer cost}. Thus, heat flow is directly proportional to the sum of EP and Landauer costs.

\Cref{eq:eplusireversed} helps to clarify some common misconceptions regarding Landauer's principle \citep{wolpert2019stochastic}. For example, logical irreversibility need not result in a Landauer cost, nor in heat flow (except in the deterministic case, where \labelcref{eq:zurek} holds). \citet{sagawa2012thermodynamics} demonstrates this with a physical example, though a simpler example suffices: consider a finite-state Markov chain with the transition matrix $P(y,\,x):=1/|\mathcal X|$. Each iteration is logically irreversible, overwriting the previous value with an independent uniformly distributed value. Nonetheless, the algorithmic entropy usually stays near $\log|\mathcal X|$, so the Landauer cost is zero. Since $P$ is doubly stochastic, it can be implemented without contacting a reservoir, so the heat flow is also zero.

Another common mistake is to identify the Landauer cost with a system's long-term energy consumption. In reality, Landauer costs are reversible: for a computer memory whose entropy is bounded from both above and below, positive and negative Landauer costs must approximately balance each other in the long run. That is, we have the long-run homeostasis condition $\Delta K\approx 0$, which by \labelcref{eq:eplusireversed} implies $-\Delta_e K \approx \Delta_i K$. By \labelcref{eq:heat2}, the long-term energy consumption is therefore proportional to the irreversible EP cost:
\begin{equation}
\label{eq:landauersimple}
Q \approx k_BT\ln 2\cdot \Delta_i K.
\end{equation}

This suggests an interesting accounting trick. The obvious way to compute the net heat flow $Q$ over a long series of events, is to sum the heat emission or absorption from each individual event; such a sum may include redundant Landauer costs that cancel due to opposite signs. Alternatively, \labelcref{eq:landauersimple} says that we can sum the EP costs of the individual events, and divide by $k_BT\ln 2$. The latter methodology not only avoids redundant cancellations, but also expresses the energy cost directly in terms of the algorithmic information-theoretic quantity $\Delta_i K$.

To gain some further intuition, we now examine three common types of information process: randomization, computation, and measurement. In each case, we ask whether there is an EP cost, and how that translates to net heat emission.

First, consider the generation of random data, perhaps to serve as a seed for a randomized computation. $K$ increases, corresponding to a negative Landauer cost. In principle, \citet{bennett1982thermodynamics} shows that entropy can be reversibly extracted from a heat reservoir, cooling it while randomizing a piece of digital memory. Later, the memory's entropy can be reversibly returned to the reservoir for a positive Landauer cost, warming it while clearing the memory. This cycle has zero net heat flow. Notice that the cooling step is essential: if the memory collects entropy in an uncontrolled manner, without cooling the reservoir, then \labelcref{eq:eplusireversed} requires the negative Landauer cost to be offset by a positive EP cost. By the time the memory is cleared, there will be some net heat emission, as predicted by \labelcref{eq:landauersimple}.

In the second case, consider a long string $x$ resulting from a deterministic computation. Although $x$ may appear complex, in reality its entropy $K(x)$ is about as small as the program that computed it. Only when we ignore the origins of $x$ and toss it into a stochastic reservoir, is entropy produced. Since the string was not truly random, we have $\Delta K\approx 0$. Thus, the warming of the reservoir cannot be attributed to Landauer cost, and is in fact an EP cost. In principle, a \textbf{reversible computer} can avoid EP and heat emission, clearing $x$ by running its computation in reverse \citep{Vitnyi2005TimeSA,demaine2016energy,morita2017theory,baumeler2019free}.

Finally, consider sensing or measurement. In \Cref{sec:maxwell}, we saw that a memory can reversibly take a measurement of another object. By interacting again with the same object, the measurement can be undone at zero cost. On the other hand, if either copy of the data is lost \emph{without} them interacting, either because the memory is overwritten or the source object changes state, then there is an EP cost.

To see this, consider two arbitrary systems (e.g., a memory and some object), in the respective states $x$ and $y$. Using \labelcref{eq:mutinf}, their total entropy decomposes as
\begin{equation*}
K(x,\,y) = K(x) + K(y) - I(x\mi y).
\end{equation*}
As a result, the entropy production is a sum of three terms:
\begin{equation}
\label{eq:jointproduction}
\Delta_i K := \Delta K(x,\,y)
= \Delta K(x) + \Delta K(y) + \Delta (-I(x\mi y)).
\end{equation}

Substituting into \labelcref{eq:flucintegral} (which again follows from \Cref{thm:lawS}), we obtain a formulation of the second law that explicitly accounts for the change in mutual information between the systems. It is an algorithmic analogue of the result by \citet{sagawa2012fluctuation}.

Now, suppose the two systems do not interact, making each of them an isolated system. Then, \Cref{thm:lawS} applies to each system individually: up to minor fluctuations, it says that their respective entropies $K(x)$ and $K(y)$ are non-decreasing. \Cref{thm:lawI} also applies, saying that $I(x\mi y)$ is non-increasing. Since all three terms in the decomposition \labelcref{eq:jointproduction}  are nonnegative, we can view each of them as separate EP costs. In particular, we conclude that for non-interacting systems, discarded mutual information is a form of EP.

Whether our aim is to randomize, to compute, or to measure, the absence of entropy is a resource to be carefully managed. In the first case, it is exchanged with a heat reservoir; in the second, it is encrypted by a computation; and in the third, it is stored in the mutual information between two systems. In principle, all these manipulations can be done reversibly, at zero net cost. However, when there is a mismatch between our technological mechanism and the information that it processes, then we pay an EP cost, which ultimately turns to heat according to \labelcref{eq:landauersimple}.

\subsection{An information engine}
\label{sec:engine}

In the physical world, energy is conserved. When a system ``consumes'' energy, the total energy does not decrease; instead, it transforms into waste heat. \labelcref{eq:landauersimple} equates the heat $Q$ with the entropy production $\Delta_i K$; thus, the ``resource'' that is consumed is in fact the negentropy \labelcref{eq:negentropy}.

While the utility of negentropy is apparent throughout the engineering disciplines, it is helpful to see why negentropy is useful from a purely information-theoretic point of view. To do so, we model an information-theoretic analogue of a heat engine. Our ``information engine'' operates in an abstract Universe of coarse-grained subsystems, with no concept of reservoirs, energy, or heat. Setting $\pi:=\sharp$ and identifying the state space of each subsystem with the set of binary strings of a fixed length, the negentropy \labelcref{eq:negentropy} of any given state $x$ reduces to approximately $|x| - K(x)$, the compressibility of $x$. Thus, compressible strings are the resource which should power the engine.

Let the engine have an internal memory system with state space $\bits^m$. Using the self-delimiting encodings \labelcref{eq:prefixfree} and $\pf x := \pf{|x|}x$, any string $x\in\bits^*$ that satisfies $|\pf x|\le m$ can be encoded in memory as the concatenation of $\pf{|x|}$, $x$, and a padding of $m-|\pf x|$ zeros, which we denote by
\[\enc(x) := \pf x\mathtt 0^{m-|\pf x|}
= \pf{|x|}x\mathtt 0^{m-|\pf x|}\in\bits^m.\]
The self-delimiting prefix $\pf{|x|}$ makes $\enc(x)$ uniquely decodable into its three parts. 

The engine uses a fixed \textbf{lossless compression algorithm}: a computable injective function $f:\bits^*\rightarrow\bits^*$, whose worst-case blowup
\[c := \max_{x\in\bits^*}\left\{|\pf{f(x)}|-|\pf{x}|\right\}\]
is much less than $m$. Since $\enc$ and $f$ are injective, the mapping
\begin{equation}
\label{eq:enc}
\enc(x) \mapsto \enc(f(x)),
\end{equation}
defined on the range of $\enc$, is also injective.

If $f$ is not simply describable, we can take it to be programmed onto a read-only section of memory, acting as a control in the simply describable joint mapping $g: (f,\,x)\mapsto (f,\,f(x))$. Thus, no generality is lost in assuming $K(f)\eqplus 0$, which implies
\[K(x)\eqplus K(f(x))\lplus |\pf{f(x)}|.\]
Therefore, $K(x)$ sets an optimistic bound on how well the mapping \labelcref{eq:enc} compresses the data in the memory. When the compression succeeds, the zero padding lengthens.

We are now ready to describe the engine's operation. Compressible strings are its fuel, to be collected from the environment, while incompressible strings are waste, to be expelled into the environment. The engine cycles between three modes:
\begin{enumerate}
    \item Consume (``burn'') zeros to perform some task, producing waste.
    \item Expel waste, and gather (``eat'') fresh fuel in its place.
    \item Refine (``digest'') fuel, producing zeros and a waste byproduct.
\end{enumerate}

The corresponding transitions to the memory state are summarized in a diagram:
\[\enc(x)
\xrightarrow{\mathrm{burn}}\enc(xy)
\xrightarrow{\mathrm{eat}}\enc(z)
\xrightarrow{\mathrm{digest}}\enc(f(z))
\]

Here, $x$ is a small string, perhaps compressed from the previous cycle. Hence, $\enc(x)$ has a large zero padding; we will see shortly how zeros are used to perform useful tasks. These tasks replace a portion of the padding with some other string $y$. If the engine expects $x$ and $y$ to be incompressible, it treats them as waste. The second stage identifies a promising location in the environment, where compressible strings might be found. With one reversible swap, the engine expels $xy$, and gathers the (hopefully) compressible string $z$ in its place, with $|z| = |x| + |y|$.  Finally, the third stage refines the fuel $z$ by compressing it, yielding additional zeros alongside the byproduct $f(z)$, which takes the role of $x$ when the cycle resets.

The zeros have many uses. One is that they pay for the processing of bad fuel: if the string $z$ turns out not to be compressible after all, then $f(z)$ may actually be longer than $z$, overwriting up to $c$ of the zeros. If this happens so often as to fully deplete the supply of zeros, the engine's behavior becomes ill-defined; in that event, we consider it to have ``starved to death''.

Otherwise, zero padding serves as a source of ancilla bits, fueling irreversible (many-to-one) operations by embedding them as reversible (one-to-one) operations \citep{sagawa2014thermodynamic}. Irreversible operations include data overwrites, error-correction, healing, and repair: each of these maps a larger number of ``bad'' states to a smaller number of ``good'' states. We saw an example of this in \Cref{sec:maxwell}, where the memory of Maxwell's demon is the ancilla that enables a transition $(0,\,x)\mapsto (m(x),\,y)$, even when the second law forbids directly mapping $x\mapsto y$. \citet{bennett1982thermodynamics} offers another example based on adiabatic demagnetization, consuming zeros to turn heat into work.

A living organism can use an information engine to support its growth and reproduction. A direct implementation of these operations would be irreversible, because they overwrite parts of the environment with copies of the organism's data \citep{devine2016understanding}. To get a reversible implementation, the engine can absorb the data that would be overwritten, into its zero padding.

\Cref{tab:creature} illustrates such an organism's operation of an information engine, for one burn-eat-digest cycle. First, it burns some of the zero padding to perform a useful function: in this case, swapping zeros onto a desired target location in the environment. Now that the target location is cleared, the organism can reversibly copy any data, such as a genome, onto it. At this point, the zero padding is almost used up. In order for the engine to recharge, it swaps in the compressible string $\mathtt{YummyAlphabetSoup}$ from the environment. Compressing this string restores the padding to a more useful length, maintaining a kind of internal homeostasis.

\begin{table}[H] 
\caption{One cycle of an information engine's operation. Its capacity is 120 bits, written as 20 Base64 characters. Incompressible strings are represented by randomly generated characters. The length prefix and the original copy of the genome are not shown. Every action is reversible.\label{tab:creature}}
\begin{tabular}{|c|c|c|}
\toprule
\textbf{Action} & \textbf{Information engine} & \textbf{Environment segment}\\
\midrule
Begin & $\mathtt{PG18Q000000000000000}$ & $\mathtt{1ksajddSG45VYummyAlphabetSoup}$
\\Burn/clear & $\mathtt{PG18Q1ksajddSG45V000}$ & $\mathtt{000000000000YummyAlphabetSoup}$
\\Reproduce & $\mathtt{PG18Q1ksajddSG45V000}$ & $\mathtt{CopyOfGenomeYummyAlphabetSoup}$
\\Eat & $\mathtt{YummyAlphabetSoup000}$ & $\mathtt{CopyOfGenomePG18Q1ksajddSG45V}$
\\Digest & $\mathtt{WAiKV000000000000000}$ & $\mathtt{CopyOfGenomePG18Q1ksajddSG45V}$
\\\bottomrule
\end{tabular}
\end{table}

We leave comparisons with real-world heat and information engines, such as those studied by \citet{leighton2024information}, to future work. A fuller analogy might assign energy values to memory states, similar to the combinatorial reservoirs of \citet{baumeler2022thermodynamics} and \citet{ebtekar2021information}.

\section{Discussion}
\label{sec:conclusion}

In order to develop ensemble-free definitions of thermodynamic quantities, we assembled ideas from stochastic thermodynamics, dynamical systems, and algorithmic information theory. The assumption of a Markovian coarse-graining reduces physical systems to time-homogeneous discrete-state Markov processes. In this setting, stochastic thermodynamics defines the stochastic and Gibbs-Shannon entropies in terms of probabilistic ensembles of physical states \citep{shiraishi2023introduction}. In many instances, these are good practical approximations of the algorithmic entropy.

To deal with cases where a suitable ensemble description is not available, we propose that thermodynamics be based on the algorithmic entropy of \emph{individual} states. \citet{levin1984randomness}'s randomness conservation law then leads to a nonequilibrium generalization of the second law of thermodynamics (\Cref{cor:secondlaw}). To ensure the accuracy of our conclusions, we carefully accounted for some ways that information can ``leak'', such as the elapsed time (which allows for Poincar\'e recurrence), and algorithmically complex dynamics (implemented by an exogenous control).

In terms of applications, we found that the algorithmic second law streamlines the analysis of Maxwell's demon, paving the way for thermodynamic analyses of all systems lacking a natural ensemble description. We followed up with an AIT perspective on Landauer and EP costs, which we hope will encourage more research in energy-efficient computing. In particular, \labelcref{eq:landauersimple} equates the long-term net heat flow with algorithmic entropy production.

Information-theoretic perspectives on thermodynamics are gaining traction in the physics of computing \citep{wolpert2024stochastic}, biology \citep{leighton2024information}, and microscopic devices more generally \citep{parrondo2023information}. In light of the growing focus on fluctuation theorems \citep{seifert2012stochastic}, we hope to find more applications for the algorithmic fluctuation inequalities derived from \Cref{thm:detailed}. These include the Zurek-Kolchinsky inequality \labelcref{eq:flucKe}, the Jarzynski inequality \labelcref{eq:jarzynski}, and the Landauer inequality \labelcref{eq:landauer}. In addition, future work might derive algorithmic versions of stochastic thermodynamics results not studied in this article, such as uncertainty relations and speed limit theorems \citep{vo2020unified}.

While this article focuses on the second law of thermodynamics, Markov processes are known to satisfy additional information-theoretic laws. We briefly made use of the information non-increase law (\Cref{thm:lawI}), which likewise has a probabilistic version \citep[\S2.8]{thomas2006elements}. An interesting consequence of this law is that any mutual information between systems in the present is traceable to a common cause in the past. This is a time-reversal asymmetry, perhaps even as fundamental as the second law; it may help us understand the perceptual, psychological, epistemic, and causal aspects of the so-called \emph{arrow of time} \citep{bennett1994complexity,ebtekar2021information,wolpert2024memory,ebtekar2024modeling}.

Causality here is meant not in the time-symmetric sense commonly associated with Einstein's relativity, but in the asymmetric sense of \citet{reichenbach1956direction}, \citet{lewis1973causation,lewis1979counterfactual}, and \citet{bell1975theory,bell1977free}, later refined by \citet{pearl2009causality}. It generalizes the Markov property to nonlinear causal topologies. In the language of physics, the causal Markov property constrains which spacetime regions $X$ and $Y$ may be statistically correlated, conditional on a third region $Z$. \citet{janzing2010causal} present an algorithmic causal Markov property, while \citet{lorenz2022quantum} and the references therein propose quantum versions. Causal modeling describes interactions between open systems. \citet{ito2013information} apply it to information thermodynamics; future work might extend this using AIT.

There is yet another general law to consider. In an effort to capture the complexity of intricate structures found in living organisms, \citet{bennett1988logical,bennett1994complexity} defines the \emph{logical depth} of $x$, at significance level $s$, to be the minimum runtime among programs, of length up to $K(x)+s$, that output $x$. He proves that the logical depth, if it increases, can only do so slowly. Thus, logically deep objects, such as genomes, are only created by gradual processes over a long span of time. Unlike entropy, which is maximized in the late Universe, we expect that logical depth is maximized at \emph{intermediate} times: late enough for its gradual accumulation, but not so late as to be destroyed by heat death \citep{antunes2006computational,aaronson2014quantifying,jeffery2020transitions}. Note that both mutual information and logical depth describe ways in which the negentropy of a system becomes difficult to extract, demanding that separated systems be reunited in the former case, and that a long computation be rewound in the latter.

In future work, it would be interesting to study the interactions between entropy non-decrease, mutual information non-increase, logical depth slow-increase, and any related laws that are as yet undiscovered. Together, they seem to characterize the arrow of time, mediating the role of information in physics, computation, and intelligent life \citep{bennett1994complexity}. In light of the known connections between data compression, inductive learning, and intelligence \citep{hutter2004universal,wallace2005statistical,hutter2024introduction}, it might be interesting to study intelligent agent behavior from the perspective of optimizing information engines along the lines of \Cref{sec:engine}.

Finally, extending algorithmic thermodynamics to incorporate quantum information remains a wide open problem. As a promising start, several quantum analogues of the description complexity have been proposed, each with different properties \citep{berthiaume2001quantum,gacs2001quantum,vitanyi2001quantum,mora2007quantum}. Just as chaos makes classical systems probabilistic (see \Cref{sec:apxmarkov}), decoherence makes quantum systems behave like mixed channels \citep[\S6.2]{sagawa2022entropy} \citep{peres1984stability,zurek1998decoherence,schlosshauer2007decoherence,esposito2010entropy}, which might help explain their irreversibility.

\begin{acknowledgements}
We are thankful for Xianda Sun's review of an early draft. In addition, the first author benefited from discussions with Jason Li, Gordon Walter Semenoff, Alexander Shen, Laurent Bienvenu, Sven Bachmann, William Unruh, Danica Sutherland, William Evans, and David Wakeham, as well as Matthew Leighton and other participants at David Wolpert and G\"ulce Karde\c s' workshop on Stochastic Thermodynamics and Computer Science Theory at the Santa Fe Institute.
\end{acknowledgements}

\appendix

\section{A Markovian coarse-graining}
\label{sec:apxmarkov}

No discussion of the second law would be complete without addressing the fundamental modeling assumptions responsible for the asymmetry between past and future. To simplify matters, consider the doubly stochastic case, corresponding to a phase space partitioned into cells of equal Liouville measure. The time reversal of a \emph{deterministic} doubly stochastic process is again doubly stochastic; by \Cref{cor:janzing}, its entropy can neither increase nor decrease at an appreciable rate.

Randomness breaks the symmetry: given a time-homogeneous doubly stochastic process, its time reversal need not be time-homogeneous nor double stochastic \citep{cover1994processes}. This matches our real-life macroscopic experience, where forward evolutions follow localized statistical laws, but backward evolutions do not. For example, a glass vase in free fall will shatter at a predictable time; and while the final arrangement of its pieces is chaotic and hard to predict, we can expect it to follow a well-defined statistical distribution. Moreover, our statistical prediction would not depend on any prior or concurrent happenings, e.g., at the neighbor's house.

In contrast, consider the time-reversed view, where we see a broken vase and want to retrodict its time of impact. It is hard to make even a meaningful statistical prediction. Our best attempt would be based on principles beyond the localized physics: for example, we might take into account a conversation at the neighbor's house, telling of the accident. In the reverse dynamics, distant shards begin to converge simultaneously, in apparent violation of locality.

Experience suggests that time-homogeneous Markov processes, despite their asymmetry, are good models of real macroscopic systems. Meanwhile, the fundamental microscopic laws of nature are widely believed to be deterministic and CPT symmetric \citep{lehnert2016cpt}. How, then, can nature's coarse-grained evolution violate this symmetry?

To demonstrate the plausibility of such an emergent asymmetry, we construct a system for which it occurs. \citet{gaspard1992diffusion} defines the \textbf{multibaker map}: a deterministic time-reversible dynamical system that, when suitably coarse-grained, emulates a random walk. \citet{altaner2012microscopic} generalize the multibaker map to emulate arbitrary Markov chains. To convey their idea in an easier fashion, we now present multibaker maps at an intermediate level of generality.

Fix an integer $m>1$. We augment the coarse-grained state space $\mathcal X$ with a bi-infinite sequence of $\ints_m$-valued microvariables, so that the total fine-grained state space is $\mathcal X\times(\ints_m)^\ints$. Every individual fine-grained state can be written in the form
\[(x,\; (\ldots,\, r_{-2},\,r_{-1},\,r_0,\,r_1,\,r_2,\,\ldots)),\]
where $x\in\mathcal X$ is the coarse-grained part, and the $r_i\in\ints_m$ collect the remaining fine-grained information. Alternatively, we can rearrange the variables and punctuation into
\[(x.r_{-1}r_{-2}r_{-3}\ldots,\; 0.r_0r_1r_2\ldots).\]

The ``0.'' here is purely symbolic. If we were to identify $\mathcal X$ with $\ints$, the latter notation is suggestive of the base $m$ representation of a point in the two-dimensional ``phase space'' $\reals \times [0,1]$. There is an extensive literature that studies symbolic representations as proxies for continuous chaotic dynamical systems; for theory and examples, see \citet{lind2021introduction}.

At each discrete time step, the system evolves by a deterministic and reversible two-stage transformation. The first stage shifts all of the $r_i$ by one position; we think of it as emulating microscopic chaos. The second stage applies a fixed bijection of $\mathcal X\times\ints_m$ to the pair $(x,\,r_0)$; we think of it as emulating the coarse-grained physics. In summary:
\begin{align*}
&(x.r_{-1}r_{-2}r_{-3}\ldots,\; 0.r_0r_1r_2\ldots)
\\\xrightarrow{\mathrm{shift}}\;
&(x.r_0r_{-1}r_{-2}\ldots,\; 0.r_1r_2r_3\ldots)
\\\xrightarrow{\mathrm{transform}}\;
&(x'.r_0'r_{-1}r_{-2}\ldots,\; 0.r_1r_2r_3\ldots).
\end{align*}

The system's only source of randomness is its initial condition. At the start time $t=0$, we allow $x$ to have any chosen distribution, but require the $r_i$ to be uniformly distributed, with all of the variables being independent. We can think of $r_i\in\ints_m$ as an $m$-sided die used to emulate a stochastic transition of $x$ at the time $t=i$. In the coarse-grained view, where we marginalize all of the $r_i$, it is easy to verify that $x$'s trajectory is a time-homogeneous doubly stochastic Markov chain, whose transition matrix entries are all multiples of $1/m$. It follows from \Cref{cor:secondlaw} that $S_\sharp(x) = K(x)$ is non-decreasing up to minor fluctuations.

In fact, our multibaker map can emulate \emph{all} such Markov chains, by a suitable choice of the bijection $T:(x,\,r_0)\mapsto (x',\,r_0')$. Indeed, recall that a Markov chain's distribution is uniquely determined by its initial condition and transition matrix. Since we already allow the initial distribution of $x$ to be arbitrary, it remains only to emulate the doubly stochastic matrix $P$. Since its entries are multiples of $1/m$, we only need to assign each pair $x,y\in\mathcal X$ to each other with multiplicity $m\cdot P(y,\,x)$. One way to accomplish this is to fix any total order $<$ on $\mathcal X$, and let
\begin{align*}
T\left(x,\; i+m\sum_{z<y} P(z,\,x)\right)
:= \left(y,\; i+m\sum_{z<x} P(y,\,z)\right)
\quad \forall x,y\in\mathcal X,\;i\in\ints_{m\cdot P(y,\,x)}.
\end{align*}

Thus, quite a diverse class of Markov chains arise as the coarse-grained part $x$ of some multibaker map. In particular, this construction realizes every example of a doubly stochastic Markov chain in this article as a deterministic time-reversible map, by appending the microvariables $r_i$ to its state.

The full construction by \citet{altaner2012microscopic} relaxes the requirement that $P$ be doubly stochastic, or that its entries have a common denominator $m$. The unpublished manuscript by \citet{ebtekar2021information} does the same in a different manner, and relaxes the requirement that the $r_i$ be uniform and independent: provided that the fine-grained state starts with a continuous distribution, it is shown that the dynamics eventually stabilize to become Markovian, time-homogeneous, and doubly stochastic. Thus, any sufficiently smooth initial distribution may serve as \citet{albert2001time}'s \emph{Past Hypothesis}. Ebtekar and Hutter \citep{ebtekar2021information,ebtekar2024modeling} provide further extensions to model causal interactions.

While these conclusions are only proven for variants of the multibaker maps, they are highly suggestive of techniques that we might try extending to realistic systems. \citet[\S4.8]{gaspard2022statistical} suggests that we should seek a short-term ergodic property of the state's microscopic part, occuring on a much faster time scale than macroscopic ergodicity. Given a suitable coarse-graining, the goal would be to prove fast convergence to Markovian behavior, long before the slower but better-understood convergence to maximum entropy. In this manner, we hope to establish the second law of thermodynamics as a mathematically rigorous property of real, CPT-symmetric systems.

\section{Conservation of randomness}
\label{sec:apxrandom}

To state the needed mathematical results in full generality, we allow the reference measure $\pi$ to be non-stationary. Denoting the probability of each state transition $x\rightarrow y$ by $P(y,\,x)$, the algorithmic entropy production is defined by
\[S_{P\pi}(y\mid\widetilde P) - S_\pi(x\mid\widetilde P),\]
where $P\pi$, $\widetilde P$ and $S_\pi$ are given by \labelcref{eq:dualpi,eq:dualP,eq:entropy}, respectively.

Before going into formal proofs, we sketch some intuition for why we expect the entropy production to be positive. For notational convenience, let $x^*:= (x,\,K(x))$ and $x^*_z:= (x,\,K(x\mid z))$. Using \labelcref{eq:negentropy}, define the \textbf{conditional randomness deficiency}
\begin{align*}
d_P(y\mid x)
&:= J_{P(\cdot,\,x)}(y\mid x^*)
= \log\frac{1}{P(y,\,x)} - K(y\mid x^*).
\end{align*}

Omitting the conditioning on $\widetilde P$ for brevity, the algorithmic entropy production during a state transition $x\rightarrow y$ can be expressed as
\begin{align*}
S_{P\pi}(y) - S_\pi(x) 
&= K(y) - K(x) + \log\frac{P\pi(y)}{\pi(x)}
\\&= K(y) - K(x) + \log\frac{P(y,\,x)}{\widetilde P(x,\,y)}
\\&\eqplus K(y\mid x^*) - K(x\mid y^*) + \log\frac{P(y,\,x)}{\widetilde P(x,\,y)}
\\&= d_{\widetilde P}(x\mid y) - d_P(y\mid x).
\end{align*}

From the last line's symmetry, one might guess that the entropy production is equally likely to be positive or negative. However, note that in general,
\[\Pr(y\mid x)=P(y,\,x),\text{ whereas }
\Pr(x\mid y)\ne\widetilde P(x,\,y).\]

While randomness deficiencies typically satisfy $d_P(y\mid x)\approx 0$, under the mismatched backward probabilities we may have $d_{\widetilde P}(x\mid y)\gg 0$. Thus, the algorithmic entropy production measures the extent to which $x$ is an ``atypical'' predecessor of $y$, when viewed as a sample from $\widetilde P(\cdot,\,y)$. For example, a low-entropy initial state $x$ would be highly atypical for the doubly stochastic matrix $P(y,\,x)=\widetilde P(x,\,y)=1/|\mathcal X|$, so the expected entropy production is high in this case.

Our formal proofs are based on the following lemma. Like the conditional randomness deficiency, $f(y,\,x)$ here can be interpreted as a conditional test of $P$-randomness for $y$, given some data $g(x)$. When $f(y,\,x)$ is large, a statistician would reject the claim that $y$ was sampled from the distribution $P(\cdot,\,x)$ \citep[\S4.3.5]{li2019introduction} \citep{ramdas2023game,ramdas2024hypothesis}. Note that neither $f$ nor $g$ are required to be computable.

In the physical interpretation, we will see that $-f(y,\,x)$ generalizes the role of entropy production during a state transition $x\rightarrow y$. \Cref{lem:detail2integral} says that the \textbf{detailed fluctuation inequality} \labelcref{eq:detailed} implies the \textbf{integral fluctuation inequalities} \labelcref{eq:integral}, the \textbf{mean bounds} \labelcref{eq:jensen}, and the \textbf{tail bound} \labelcref{eq:tail}. Thus, for any $f$, proving \labelcref{eq:detailed} is sufficient to conclude the rest.

\begin{lemma}
\label{lem:detail2integral}
Let $X,Y$ be $\mathcal X,\mathcal Y$-valued random variables, and write $P(y,\,x) := \Pr(Y=y \mid X=x)$. Let $f:\mathcal Y\times\mathcal X\rightarrow\reals$ and $g:\mathcal X\rightarrow\bits^*$ be any functions satisfying
\begin{equation}
\label{eq:detailed}
\forall x\in\mathcal X,\, y\in\mathcal Y, \quad
f(y,\,x) \le \log\frac{1}{P(y,\,x)} - K(y\mid g(x)).
\end{equation}
Then,
\begin{align}
\expect{2^{f(Y,\,X)}\mid X}
&< 1,
\qquad\expect{2^{f(Y,\,X)}}
< 1,\label{eq:integral}
\\\expect{f(Y,\,X)\mid X} &< 0,
\qquad\expect{f(Y,\,X)} < 0,\label{eq:jensen}
\end{align}
and for $\delta > 0$, with probability greater than $1-\delta$,
\begin{equation}
\label{eq:tail}
f(Y,\,X) < \log\frac 1\delta.
\end{equation}
If instead \labelcref{eq:detailed} holds with $\lplus$, then so do \labelcref{eq:jensen} and \labelcref{eq:tail}, and \labelcref{eq:integral} then holds with $\lmul$.
\end{lemma}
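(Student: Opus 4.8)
The plan is to derive all four conclusions from a single conditional estimate obtained directly from Kraft's inequality \labelcref{eq:kraft}. First I would fix $x\in\mathcal X$ and write
\[
\expect{2^{f(Y,\,X)}\mid X=x}=\sum_{y\in\mathcal Y}P(y,\,x)\,2^{f(y,\,x)}.
\]
Exponentiating the hypothesis \labelcref{eq:detailed} gives $2^{f(y,\,x)}\le 2^{-K(y\mid g(x))}/P(y,\,x)$, so each term of the sum is at most $2^{-K(y\mid g(x))}$; summing over $y\in\mathcal Y$, viewed through its fixed encoding as a subset of $\bits^*$, and applying Kraft's inequality conditioned on the string $g(x)$ yields $\expect{2^{f(Y,\,X)}\mid X=x}<1$. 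This is the conditional half of \labelcref{eq:integral}. The unconditional half is then immediate from the law of total expectation, $\expect{2^{f(Y,\,X)}}=\expect{\expect{2^{f(Y,\,X)}\mid X}}<1$.

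Next I would read off the remaining bounds from this estimate. For \labelcref{eq:jensen}, since $2^{f(Y,\,X)}$ has strictly positive conditional expectation (some $y$ has $P(y,\,x)>0$ because $\sum_y P(y,\,x)=1$) and $\log$ is concave, Jensen's inequality gives $\expect{f(Y,\,X)\mid X}=\expect{\log 2^{f(Y,\,X)}\mid X}\le\log\expect{2^{f(Y,\,X)}\mid X}<0$, and likewise without conditioning; the degenerate case $\expect{f(Y,\,X)\mid X}=-\infty$ is consistent with the claimed strict inequality. For the tail bound \labelcref{eq:tail}, I would apply Markov's inequality to the nonnegative random variable $2^{f(Y,\,X)}$: $\Pr\!\bigl(f(Y,\,X)\ge\log\tfrac1\delta\bigr)=\Pr\!\bigl(2^{f(Y,\,X)}\ge\tfrac1\delta\bigr)\le\delta\expect{2^{f(Y,\,X)}}<\delta$.

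For the $\lplus$ refinement, suppose \labelcref{eq:detailed} holds only up to an additive constant $c$. Then $f-c$ satisfies \labelcref{eq:detailed} exactly, so applying the results above to $f-c$ gives $\expect{2^{f(Y,\,X)}}<2^c$, hence \labelcref{eq:integral} with $\lmul$; gives $\expect{f(Y,\,X)\mid X}<c$, hence \labelcref{eq:jensen} with $\lplus$; and gives $f(Y,\,X)<c+\log\tfrac1\delta$ with probability greater than $1-\delta$, hence \labelcref{eq:tail} with $\lplus$; the conditional statements transform the same way.

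I do not expect a genuine obstacle here: the argument is a short chain of Kraft's inequality, total expectation, Jensen, and Markov. The only points that need care are bookkeeping ones — conditioning Kraft's inequality on the string $g(x)$ so that the sum over $\mathcal Y$ is controlled, checking that strictness is preserved throughout (it is, since \labelcref{eq:kraft} is strict), verifying that $\expect{2^{f(Y,\,X)}\mid X}$ is finite and positive so that Jensen applies legitimately, and observing that no step invokes computability of $f$ or $g$, as promised in the statement.
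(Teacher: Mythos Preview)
Your proposal is correct and follows essentially the same route as the paper: Kraft's inequality for the conditional integral bound, total expectation for the unconditional version, Jensen for the mean bounds, and Markov for the tail bound. Your treatment of the $\lplus$ case by applying the exact result to $f-c$ is slightly more explicit than the paper's ``similarly'', but the argument is the same.
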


\begin{proof}
Rearranging \labelcref{eq:detailed},
\[\log P(y,\,x) + f(y,\,x) \le -K(y\mid g(x)).\]

To get \labelcref{eq:integral}, apply Kraft's inequality \labelcref{eq:kraft}:
\[\expect{2^{f(Y,\,X)}\mid X}
:= \sum_{y\in\mathcal Y} P(y,\,X) \cdot 2^{f(y,\,X)}
\le \sum_{y\in\mathcal Y}  2^{-K(y\mid g(X))}
< 1.\]

To get \labelcref{eq:jensen}, apply Jensen's inequality:
\[\expect{f(Y,\,X)\mid X}
\le \log\expect{2^{f(Y,\,X)}\mid X}
< 0.\]

The unconditional versions of \labelcref{eq:integral,eq:jensen} follow from the law of total expectation.  Finally, let $\delta>0$. Markov's inequality on \labelcref{eq:integral} implies that, with probability greater than $1-\delta$,
\[2^{f(Y,\,X)} < \frac 1\delta.\]

Taking logarithms now yields \labelcref{eq:tail}. The case with $\lplus$ follows similarly.
\end{proof}

As a first application of \Cref{lem:detail2integral}, consider the case where $X$ is constant, $Y$ is distributed in proportion to some system's stationary distribution $\pi$, and $g(\cdot) := \widetilde P$ (i.e., $g$ is a constant function that outputs a program computing $\widetilde P$). Then, the right-hand side of \labelcref{eq:detailed} reduces to the negentropy \labelcref{eq:negentropy}, and the conclusion \labelcref{eq:tail} agrees with \labelcref{eq:negentropysaturates}.

Next, we turn to general nonequilibrium dynamics. We present detailed fluctuation inequalities for the change in each of the quantities $K$, $\pi$, $S_\pi$, and $I$.

\begin{theorem}[Detailed fluctuations]
\label{thm:detailed}
Let $\pi:\mathcal X\rightarrow\reals^+\setminus\{0\}$ be a measure. When they appear as side information, suppose $P:\mathcal Y\times\mathcal X\rightarrow\reals^+$ and $\widetilde P$ from \labelcref{eq:dualP} are computable. Then, for all $x\in\mathcal X$, $y\in\mathcal Y$, and $z\in\bits^*$,
\begin{align}
K(y\mid P) - K(x\mid P) &\lplus \log\frac{1}{P(y,\,x)} - K(x\mid y^*_P,P),\label{eq:indivK}
\\\log\pi(x) - \log P\pi(y) &\lplus \log\frac{1}{P(y,\,x)} - K(x\mid y,\widetilde P),\label{eq:indivpi}
\\S_\pi(x\mid\widetilde P) - S_{P\pi}(y\mid\widetilde P) &\lplus \log\frac{1}{P(y,\,x)} - K(y\mid x^*_{\widetilde P},\widetilde P),\label{eq:indivS}
\\I(y\mi z\mid P) - I(x\mi z\mid P) &\lplus \log\frac{1}{P(y,\,x)} - K(y\mid (x,z)^*_P,P).\label{eq:indivI}
\end{align}
\end{theorem}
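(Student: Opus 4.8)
The plan is to prove all four inequalities from one template. First I would substitute the definitions \labelcref{eq:entropy,eq:mutinf} and the rearranged dual identity $\widetilde P(x,\,y)\,P\pi(y)=P(y,\,x)\,\pi(x)$ coming from \labelcref{eq:dualP}, and then use the chain rule $K(a,\,b\mid c)\eqplus K(a\mid c)+K(b\mid a^*_c,\,c)$, with $a^*_c:=(a,\,K(a\mid c))$, in both of its groupings to collapse the left-hand side and cancel common terms, until what remains is an inequality of the shape $K(\text{one state}\mid(\text{other state})^*,\,\text{matrix})\lplus\log\frac{1}{\text{matrix entry}}$. Each such residual inequality is an instance of the conditional universal coding bound: if $\mu$ is a probability measure on a countable set that is computable from side information $r$, then $K(a\mid r)\lplus\log\frac{1}{\mu(a)}$, which is immediate from $\mu(a)\lmul 2^{-K(a\mid\mu)}$ (the bound recalled in \Cref{sec:prelimait}, valid for all $a$) applied in the machine $U_r$. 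I would invoke it with $\mu:=P(\cdot,\,x)$, a distribution on $\mathcal Y$ computable from $x$ and a program for $P$, to finish \labelcref{eq:indivK} and \labelcref{eq:indivI}; and with $\mu:=\widetilde P(\cdot,\,y)$, a distribution on $\mathcal X$ computable from $y$ and a program for $\widetilde P$, to finish \labelcref{eq:indivpi} and \labelcref{eq:indivS}. The computability hypotheses on $P$ and $\widetilde P$ are exactly what make these measures admissible.

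Concretely, \labelcref{eq:indivK} is equivalent to $K(y\mid P)+K(x\mid y^*_P,\,P)\lplus K(x\mid P)+\log\frac{1}{P(y,\,x)}$; the left side is $\eqplus K(x,\,y\mid P)\eqplus K(x\mid P)+K(y\mid x^*_P,\,P)$ by the chain rule, so cancelling $K(x\mid P)$ leaves $K(y\mid x^*_P,\,P)\lplus\log\frac{1}{P(y,\,x)}$, the coding bound for $P(\cdot,\,x)$. For \labelcref{eq:indivpi}, the dual identity rewrites the left side as $\log\widetilde P(x,\,y)-\log P(y,\,x)$, and cancelling $\log\frac{1}{P(y,\,x)}$ leaves $K(x\mid y,\,\widetilde P)\lplus\log\frac{1}{\widetilde P(x,\,y)}$. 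For \labelcref{eq:indivS}, expanding $S_\pi$, $S_{P\pi}$ and substituting the dual identity reduces it by the same two-way chain rule — this time cancelling $K(y\mid\widetilde P)$ — to $K(x\mid y^*_{\widetilde P},\,\widetilde P)\lplus\log\frac{1}{\widetilde P(x,\,y)}$, again a coding bound for $\widetilde P(\cdot,\,y)$.

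Inequality \labelcref{eq:indivI} needs one extra ingredient. Unfolding $I(\cdot\mi z\mid P)=K(\cdot\mid P)+K(z\mid P)-K(\cdot,\,z\mid P)$ and applying the chain rule once as $K(x,\,z\mid P)+K(y\mid(x,\,z)^*_P,\,P)\eqplus K(x,\,y,\,z\mid P)$ and once as $K(x,\,y,\,z\mid P)\eqplus K(x\mid P)+K(y,\,z\mid x^*_P,\,P)$, the terms $K(x\mid P)$ cancel, and since $K(y,\,z\mid P)-K(y,\,z\mid x^*_P,\,P)\eqplus I(x\mi(y,\,z)\mid P)$ the claim becomes $K(y\mid P)\lplus\log\frac{1}{P(y,\,x)}+I(x\mi(y,\,z)\mid P)$. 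Applying the data-processing identity \labelcref{eq:dataproc} with $y$ and $z$ ordered so that the $y$-contribution is extracted first gives $I(x\mi(y,\,z)\mid P)\gplus I(x\mi y\mid P)\eqplus K(y\mid P)-K(y\mid x^*_P,\,P)$, so cancelling $K(y\mid P)$ leaves once again $K(y\mid x^*_P,\,P)\lplus\log\frac{1}{P(y,\,x)}$.

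The step I expect to be the main obstacle — and the one the write-up must handle carefully — is the bookkeeping of the information-content decorations: the chain rule is an equality only when the conditioning string carries exactly the complexity $K(a\mid c)$, so one must verify that the manipulations above genuinely produce $y^*_P$, $x^*_P$, $(x,\,z)^*_P$, and $y^*_{\widetilde P}$, and not some looser staged surrogate, and likewise that each use of the coding bound conditions on data from which the relevant row of $P$ or $\widetilde P$ is actually computable. These checks are routine but easy to get wrong; the substantive content is carried entirely by the coding bound, which is already available to us, together with the chain rule and data-processing identities recalled in \Cref{sec:prelimait}.
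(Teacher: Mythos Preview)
Your proposal is correct and follows essentially the same approach as the paper: reduce each inequality, via the chain rule and the dual identity $\widetilde P(x,\,y)\,P\pi(y)=P(y,\,x)\,\pi(x)$, to the coding bound $K(y\mid x,\,P)\lplus\log\frac{1}{P(y,\,x)}$ or its $\widetilde P$ counterpart. For \labelcref{eq:indivI} you unfold $I$ first and invoke $I(x\mi(y,\,z)\mid P)\gplus I(x\mi y\mid P)$, whereas the paper applies the data-processing identity \labelcref{eq:dataproc} twice directly to $K(y\mid(x,\,z)^*_P,\,P)$; the substance is the same.
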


\begin{proof}
For each $x$, $P(\cdot,\,x)$ is a probability measure computable by a constant-sized program along with $(x,P)$; similarly, $\widetilde P(\cdot,\,y)$ can be computed using $(y,\widetilde P)$. Hence, \labelcref{eq:loguniversal} implies
\begin{align*}
K(y\mid x,P) \lplus \log\frac{1}{P(y,\,x)},
\quad K(x\mid y,\widetilde P) \lplus \log\frac{1}{\widetilde P(x,\,y)}.
\end{align*}

Now, we verify the inequalities one at a time. \labelcref{eq:indivK} follows from
\begin{align*}
K(x\mid y^*_P,P)
&\eqplus K(x,y\mid P) - K(y\mid P)
\\&\lplus K(y\mid x,P) + K(x\mid P) - K(y\mid P)
\\&\lplus \log\frac{1}{P(y,\,x)} + K(x\mid P) - K(y\mid P).
\end{align*}

Similarly, \labelcref{eq:indivpi} follows from
\begin{align*}
K(x\mid y,\widetilde P)
&\lplus \log\frac{1}{\widetilde P(x,\,y)}
\\&= \log\frac{1}{P(y,\,x)} + \log P\pi(y) - \log\pi(x).
\end{align*}

The proof of \labelcref{eq:indivS} combines the steps of the previous derivations:
\begin{align*}
K(y\mid x^*_{\widetilde P},\widetilde P)
&\eqplus K(x,y\mid\widetilde P) - K(x\mid\widetilde P)
\\&\lplus K(x\mid y,\widetilde P) + K(y\mid\widetilde P) - K(x\mid\widetilde P)
\\&\lplus \log\frac{1}{\widetilde P(x,\,y)} + K(y\mid\widetilde P) - K(x\mid\widetilde P)
\\&= \log\frac{1}{P(y,\,x)} + S_{P\pi}(y\mid\widetilde P) - S_\pi(x\mid\widetilde P).
\end{align*}

\labelcref{eq:indivI} was first shown by \citet{gacs2001algorithmic}, but we present a simpler proof based on \citet{gacs2021lecture}. Applying the algorithmic data processing identity \labelcref{eq:dataproc} twice,
\begin{align*}
K(y\mid (x,z)^*_P,P)
&\eqplus K(y\mid x^*_P,P) - I(y\mi z\mid x^*_P,P)
\\&\eqplus K(y\mid x^*_P,P) + I(x\mi z\mid P) - I((x,y)\mi z\mid P)
\\&\lplus \log\frac{1}{P(y,\,x)} + I(x\mi z\mid P) - I(y\mi z\mid P).
\end{align*}
\end{proof}

\Cref{thm:detailed} says that if a transition $x\rightarrow y$ occurs with substantial probability $P(y,\,x)$, then it cannot substantially \emph{decrease} the Liouville measure $\pi$ or the algorithmic entropy $S_\pi$, nor can it substantially \emph{increase} the description complexity $K$ or the algorithmic mutual information $I$ (with respect to any fixed object $z$). This does \emph{not} imply that the quantities trend monotonically, since a large number of low-probability transitions may still sum to a high probability.

For example, consider a Markov chain that alternates between a ``hub'' state, and a uniformly random selection among a large number $m$ of other states. Formally, let $\mathcal X:=\ints_{m+1}$; for $x,y=1,\,\ldots,m$, let
\[\pi(0):=m,
\;\pi(x):=1,
\;P(0,\,0):=0,
\;P(y,\,0):=1/m,
\;P(0,\,x):=1,
\;P(y,\,x):=0.\]
Then, $P$ is $\pi$-stochastic. The hub state has $K(0\mid\widetilde P)\eqplus 0$, while most of the other states have $K(x\mid\widetilde P)\eqplus\log m$. Therefore, both $\pi$ and $K$ are highly non-monotonic, taking turns alternating between a much lower and a much higher value.

On the other hand, it is easy to check that the hub state, as well as most of the other states, have approximately $\log m$ entropy. There are a few states with low entropy: for example, $S_\pi(1\mid\widetilde P)\eqplus\log\pi(1)=0$. If we start from such a state, the entropy will immediately increase to about $\log m$, and we will seldom return to these rare low-entropy states.

In general, $S_\pi$ and $I$ may fluctuate a bit but, unlike $\pi$ and $K$, they trend monotonically. The non-decrease law for $S_\pi$ is known as \textbf{randomness conservation}, while the non-increase law for $I$ is called \textbf{information non-increase}. Both were first shown by \citet{levin1984randomness}. Our statement and proof take after the more pedagogical exposition of \citet{gacs2021lecture}, though we make additional changes. One is that we have extracted \Cref{lem:detail2integral} as a general tool, to derive these integral fluctuation inequalities from their detailed counterparts. Another is that we condition on the dual matrix $\widetilde P$, to eliminate some error terms from the older results. To eliminate $\widetilde P$ altogether, see \Cref{sec:refine}.

Note that in the main body of this article, we always have $\mathcal X=\mathcal Y$ and $P\pi=\pi$. Using \labelcref{eq:dualP}, it follows that $\widetilde P$ is computable if both $\pi$ and $P$ are.

\begin{theorem}[Randomness conservation]
\label{thm:lawS}
Let $\pi:\mathcal X\rightarrow\reals^+\setminus\{0\}$ be a measure and $X,Y$ be $\mathcal X,\mathcal Y$-valued random variables. Suppose $\widetilde P$, defined in terms of $P(y,\,x) := \Pr(Y=y \mid X=x)$ by \labelcref{eq:dualP}, is computable. Then,
\begin{equation*}
\expect{2^{S_\pi(X\mid\widetilde P) - S_{P\pi}(Y\mid\widetilde P)}} \lmul 1.
\end{equation*}
Therefore, for $\delta > 0$, with probability greater than $1-\delta$,
\begin{equation*}
S_\pi(X\mid\widetilde P) - S_{P\pi}(Y\mid\widetilde P) \lplus \log\frac 1\delta.
\end{equation*}
\end{theorem}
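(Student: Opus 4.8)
The plan is to derive the claimed integral fluctuation inequality essentially for free from two results already in hand: the detailed fluctuation bound \labelcref{eq:indivS} of \Cref{thm:detailed}, and the generic detailed-to-integral \Cref{lem:detail2integral}. Once those are granted, the proof is pure bookkeeping; all of the genuine content has been pushed upstream into \Cref{thm:detailed}.

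Concretely, I would set $f(y,\,x) := S_\pi(x\mid\widetilde P) - S_{P\pi}(y\mid\widetilde P)$ and $g(x) := (x^*_{\widetilde P},\,\widetilde P)$. Because $\widetilde P$ is a single fixed computable object, $g$ is a bona fide function of $x$ alone, so $(f,\,g)$ is an admissible pair for \Cref{lem:detail2integral}. Its hypothesis \labelcref{eq:detailed}, in the $\lplus$ form, asks that
\[f(y,\,x) \lplus \log\frac{1}{P(y,\,x)} - K\!\left(y \mid g(x)\right),\]
and this is verbatim \labelcref{eq:indivS}. Invoking the $\lplus$ branch of \Cref{lem:detail2integral} then yields \labelcref{eq:integral} with a $\lmul$, namely $\expect{2^{f(Y,\,X)}} \lmul 1$ — the first display of the theorem — together with the tail bound \labelcref{eq:tail} with a $\lplus$, namely that for each $\delta > 0$ one has $f(Y,\,X) \lplus \log\frac 1\delta$ with probability exceeding $1-\delta$, which is the second display. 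This is the whole proof.

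The one place any real effort is spent is upstream, in \labelcref{eq:indivS} itself, so that is what I would flag as the main obstacle if reproving the theorem from scratch. To get it, expand $S_\pi(x\mid\widetilde P) - S_{P\pi}(y\mid\widetilde P) = K(x\mid\widetilde P) - K(y\mid\widetilde P) + \log\frac{\pi(x)}{P\pi(y)}$, and rewrite the last term as $\log\frac{\widetilde P(x,\,y)}{P(y,\,x)}$ using the definition \labelcref{eq:dualP} of the dual matrix. Since $\widetilde P(\cdot,\,y)$ is a computable probability measure, \labelcref{eq:loguniversal} gives $K(x\mid y,\,\widetilde P) \lplus \log\frac{1}{\widetilde P(x,\,y)}$; and the chain rule together with subadditivity of $K$ give $K(y\mid x^*_{\widetilde P},\,\widetilde P) \eqplus K(x,y\mid\widetilde P) - K(x\mid\widetilde P) \lplus K(x\mid y,\,\widetilde P) + K(y\mid\widetilde P) - K(x\mid\widetilde P)$. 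Substituting the first bound into the second and rearranging produces exactly \labelcref{eq:indivS}. This is also the step where computability of $\widetilde P$ is essential: conditioning on it lets the subadditivity chain lose only a universal constant, rather than a $K(\widetilde P)$-sized penalty. The subsequent passage to the integral and tail forms is then just Kraft's inequality \labelcref{eq:kraft} for the conditional expectation, the law of total expectation to remove the conditioning, and Markov's inequality for the tail — precisely the internals of \Cref{lem:detail2integral}.
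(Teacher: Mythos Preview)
Your proof is correct and essentially identical to the paper's: the paper likewise sets $f(y,\,x) := S_\pi(x\mid\widetilde P) - S_{P\pi}(y\mid\widetilde P)$ and $g(x) := (x^*_{\widetilde P},\widetilde P)$, observes that \labelcref{eq:indivS} from \Cref{thm:detailed} supplies the hypothesis of \Cref{lem:detail2integral}, and reads off the conclusions. Your additional paragraph reconstructing the proof of \labelcref{eq:indivS} also matches the paper's derivation in \Cref{thm:detailed}.
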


\begin{proof}
Let $f(x,\,y) := S_\pi(x\mid\widetilde P) - S_{P\pi}(y\mid\widetilde P)$ and $g(x) := (x^*_{\widetilde P},\widetilde P)$. Then, \labelcref{eq:indivS} from \Cref{thm:detailed} implies that the hypotheses of \Cref{lem:detail2integral} hold, and therefore so do its conclusions.
\end{proof}

Finally, physical applications motivate us to frame the information non-increase law in terms of two independently evolving systems.

\begin{theorem}[Information non-increase]
\label{thm:lawI}
For $i=1,2$, let $P_i:\mathcal Y_i\times\mathcal X_i\rightarrow\reals^+$ be computable stochastic matrices, and $X_i,Y_i$ be $\mathcal X_i,\mathcal Y_i$-valued random variables, such that for all $x_i\in\mathcal X_i$ and $y_i\in\mathcal Y_i$,
\begin{equation*}
\label{eq:indepmatrix}
\Pr((Y_1,Y_2)=(y_1,y_2) \mid (X_1,X_2)=(x_1,x_2)) = P_1(y_1,\,x_1)P_2(y_2,\,x_2).
\end{equation*}
Then, writing $P:=(P_1,\,P_2)$,
\begin{equation*}
\label{eq:flucI}
\expect{2^{I(Y_1\mi Y_2\mid P)-I(X_1\mi X_2\mid P)}} \lmul 1.
\end{equation*}
Therefore, for $\delta > 0$, with probability greater than $1-\delta$,
\begin{equation*}
\label{eq:lawI}
I(Y_1\mi Y_2\mid P)-I(X_1\mi X_2\mid P) \lplus  \log\frac 1\delta.
\end{equation*}
\end{theorem}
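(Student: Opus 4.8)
The plan is to deduce \Cref{thm:lawI} from the detailed inequality \labelcref{eq:indivI} of \Cref{thm:detailed} together with \Cref{lem:detail2integral}, applied \emph{twice} — once to transform system $1$ while system $2$ is held fixed, and once to transform system $2$ while system $1$'s new state is held fixed — and then to splice the two halves using the tower rule for conditional expectations. Throughout I condition every $K$ and $I$ on $P=(P_1,P_2)$ (note $P_1,P_2$ are each computable from $P$), and I exploit that the product form of the joint channel makes each marginal transition behave like the corresponding $P_i$: one checks $\Pr(Y_1=y_1\mid X_1=x_1,X_2=x_2)=P_1(y_1,x_1)$ and $\Pr(Y_2=y_2\mid X_1=x_1,X_2=x_2,Y_1=y_1)=P_2(y_2,x_2)$.

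For the first step I apply \Cref{lem:detail2integral} to the channel $X_1\mapsto Y_1$, carrying $x_2$ as side information, with $f(y_1,x_1):=I(y_1\mi x_2\mid P)-I(x_1\mi x_2\mid P)$ and $g(x_1):=\bigl((x_1,x_2)^*_P,P\bigr)$; the required hypothesis \labelcref{eq:detailed} is exactly \labelcref{eq:indivI} with $z:=x_2$. This gives $\expect{2^{I(Y_1\mi X_2\mid P)-I(X_1\mi X_2\mid P)}\mid X_1,X_2}\lmul 1$, hence $\expect{2^{I(Y_1\mi X_2\mid P)-I(X_1\mi X_2\mid P)}}\lmul 1$. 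For the second step I apply \Cref{lem:detail2integral} to the channel $X_2\mapsto Y_2$, now carrying $(x_1,y_1)$ as side information, with $f(y_2,x_2):=I(y_2\mi y_1\mid P)-I(x_2\mi y_1\mid P)$ and $g(x_2):=\bigl((x_2,y_1)^*_P,P\bigr)$; the hypothesis is again \labelcref{eq:indivI}, this time with $z:=y_1$. Using the constant-accurate symmetry of algorithmic mutual information (from \labelcref{eq:mutinf}), this yields $\expect{2^{I(Y_1\mi Y_2\mid P)-I(Y_1\mi X_2\mid P)}\mid X_1,X_2,Y_1}\lmul 1$.

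It remains to glue the pieces. Since $2^{I(Y_1\mi X_2\mid P)-I(X_1\mi X_2\mid P)}$ is a function of $(X_1,X_2,Y_1)$, it factors out of the inner expectation, and the tower rule gives
\begin{align*}
\expect{2^{I(Y_1\mi Y_2\mid P)-I(X_1\mi X_2\mid P)}}
&= \expect{2^{I(Y_1\mi X_2\mid P)-I(X_1\mi X_2\mid P)}\;\expect{2^{I(Y_1\mi Y_2\mid P)-I(Y_1\mi X_2\mid P)}\mid X_1,X_2,Y_1}}\\
&\lmul \expect{2^{I(Y_1\mi X_2\mid P)-I(X_1\mi X_2\mid P)}}
\lmul 1.
\end{align*}
The tail bound then follows from Markov's inequality and taking logarithms, precisely as in the proof of \Cref{lem:detail2integral}.

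The main obstacle is bookkeeping rather than mathematics: I must verify that the constants hidden in $\lplus$ and $\lmul$ remain uniform over the side information ($x_2$ in the first step, $(x_1,y_1)$ in the second) being carried along — which is the ``$U_z$'' observation from \Cref{sec:prelimait} — and that \Cref{lem:detail2integral}'s requirement $\Pr(Y=y\mid X=x)=P(y,x)$ really holds under each of the conditionings used, as checked in the first paragraph. A tempting one-shot alternative — applying \Cref{lem:detail2integral} directly to the product channel with $f(y_1,y_2,x_1,x_2):=I(y_1\mi y_2\mid P)-I(x_1\mi x_2\mid P)$ — fails, because reducing the needed detailed inequality back to \labelcref{eq:indivI} forces a comparison of $K(\,\cdot\mid c)$ with $K(\,\cdot\mid c^*)$ for an \emph{output} string $c=y_i$, which costs an unavoidable $O(\log)$ term, too coarse for a $\lplus$ bound; the two-step route avoids this because it only ever stars pairs of the form (channel input, side information).
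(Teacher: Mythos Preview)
Your proof is correct. The two-pass structure---apply \Cref{lem:detail2integral} once for $X_1\mapsto Y_1$ with $x_2$ held as side information, once for $X_2\mapsto Y_2$ with $y_1$ held as side information, then splice via the tower property---works exactly as stated, and your check that $\Pr(Y_2=y_2\mid X_1,X_2,Y_1)=P_2(y_2,x_2)$ is precisely what is needed to justify the second application. The uniformity of the $\lplus$ constants over the carried side data follows, as you say, from the $U_z$ remark in \Cref{sec:prelimait}.

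However, your final paragraph misdiagnoses the one-shot alternative: that is exactly the route the paper takes. The paper sums the same two instances of \labelcref{eq:indivI} (weakening the second from $(y_1,x_2)^*_P$ to $(y_1,x_1,x_2)^*_P$) to obtain a single detailed inequality
\[
I(y_1\mi y_2\mid P)-I(x_1\mi x_2\mid P)\;\lplus\;\log\frac{1}{P_1(y_1,x_1)P_2(y_2,x_2)}-K(y_1,y_2\mid (x_1,x_2)^*_P,P),
\]
and then applies \Cref{lem:detail2integral} once to the product channel with $g(x_1,x_2):=((x_1,x_2)^*_P,P)$. The step you feared would cost $O(\log)$---collapsing $K(y_1\mid (x_1,x_2)^*_P,P)+K(y_2\mid (y_1,x_1,x_2)^*_P,P)$ to $K(y_1,y_2\mid (x_1,x_2)^*_P,P)$---costs only $O(1)$, because the symmetry-of-information identity $K(\cdot\mid (a,b)^*)\eqplus K(\cdot\mid a^*,\,b^*_{a^*})$ makes conditioning on $((y_1,x_1,x_2)^*_P,P)$ equivalent to conditioning on $((x_1,x_2)^*_P,\,(y_1)^*_{(x_1,x_2)^*_P,P},\,P)$, so the chain rule applies exactly. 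Your two-step route has the genuine virtue of sidestepping this identity altogether; the paper's route is more compact but relies on it implicitly.
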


\begin{proof}
The pair $P$ can be affixed with a constant-sized instruction to compute either $P_1$ or $P_2$. Applying \labelcref{eq:indivI} from \Cref{thm:detailed} twice, first with $z=x_2$ and then with $z=y_1$:
\begin{align*}
I(y_1\mi x_2\mid P) - I(x_1\mi x_2\mid P)
&\lplus \log\frac{1}{P_1(y_1,\,x_1)} - K(y_1\mid (x_1,x_2)^*_P,P),
\\I(y_1\mi y_2\mid P) - I(y_1\mi x_2\mid P)
&\lplus \log\frac{1}{P_2(y_2,\,x_2)} - K(y_2\mid (y_1,x_2)^*_P,P)
\\&\lplus \log\frac{1}{P_2(y_2,\,x_2)} - K(y_2\mid (y_1,x_1,x_2)^*_P,P).
\end{align*}

Summing these inequalities yields
\begin{align*}
I(y_1\mi y_2\mid P) - I(x_1\mi x_2\mid P)
&\lplus \log\frac{1}{P_1(y_1,\,x_1)P_2(y_2,\,x_2)} - K(y_1,y_2\mid (x_1,x_2)^*_P,P).
\end{align*}

Let $f((x_1,x_2),(y_1,y_2)) := I(y_1\mi y_2\mid P) - I(x_1\mi x_2\mid P)$ and $g((x_1,x_2)) := ((x_1,x_2)^*_P,P)$. The desired result now follows from \Cref{lem:detail2integral}.
\end{proof}

\bibliographystyle{apsrev4-1}
\bibliography{main}

\end{document}